\documentclass[11pt, a4paper,onecolumn]{article}
\pdfoutput=1
\usepackage[utf8]{inputenc}
\usepackage{amsmath}
\usepackage{amssymb}
\usepackage{stmaryrd}
\usepackage{graphicx}
\usepackage[dvipsnames]{xcolor}
\usepackage{multirow}
\usepackage[normalem]{ulem}
\usepackage{comment}
\usepackage{authblk}

\usepackage{compactbib}
\usepackage{paralist}

\newcommand{\C}{\mathbb{C}}
\newcommand{\R}{\mathbb{R}}
\newcommand{\cH}{\mathcal{H}}
\newcommand{\cS}{\mathcal{S}}
\newcommand{\cE}{\mathcal{E}}
\newcommand{\cT}{\mathcal{T}}
\newcommand{\bo}{\boldsymbol{\omega}}
\newcommand{\be}{\boldsymbol{e}}

\newcommand{\I}{\operatorname{I}}
\newcommand{\tr}{\operatorname{tr}}
\def\ket#1{|#1\rangle}

\def\ketbra#1#2{|#1\rangle\langle#2|}
\def\ketbraq#1{|#1\rangle\langle#1|}
\def\bra#1{\langle#1|}

\newcommand{\Ks}{$K_\mathsf{Sym}[\mathcal{P}]$ }

\DeclareMathOperator{\Conv}{Conv}

\newcommand{\jnote}[1]{\textcolor{Red}{[JHS:#1]}}
\newcommand{\bel}{\color{magenta}}

\newcommand{\blk}{\color{black}}
\newcommand{\bnote}[1]{\textcolor{orange}{#1}}

\usepackage{dsfont}
\usepackage{rotating}
\usepackage{floatpag}
\rotfloatpagestyle{empty}

\setcounter{tocdepth}{4}

%\usepackage{hyperref}
%\font\omding=omding

\usepackage{amsmath,amsthm,amssymb}
\usepackage{xspace,enumerate,color,epsfig}
\usepackage{graphicx}
\graphicspath{{.}{./figures/}}
\usepackage{marvosym}

\usepackage{tikzfig}
\usepackage{stmaryrd}
\usepackage{docmute}
\usepackage{keycommand}

\usepackage{pifont}% http://ctan.org/pkg/pifont

\usepackage{enumitem}

\input{defs.tex}
\usetikzlibrary{shapes.multipart}

\tikzstyle{CqWire}=[color=gray,line width = .75pt,->-]
\tikzstyle{CcWire}=[cWire]%[color=gray,line width = .75pt,->-]%[densely dotted, thick]
\tikzstyle{RqWire}=[line width = 1pt, color=black,-<-]
\tikzstyle{RcWire}=[cWire]%[color=gray,line width = .75pt,-<-]%[densely dotted, thick]
%\tikzstyle{qWire}=[line width = 1pt, color=black]
%\tikzstyle{cWire}=[color=gray,line width = .75pt]%[densely dotted, thick]
%MY GROUND:
\tikzstyle{env}=[copoint,regular polygon rotate=0,minimum width=0.2cm, fill=black]

\tikzstyle{probs}=[shape=semicircle,fill=white,draw=black,shape border rotate=180,minimum width=1.2cm]

%SIMON'S GROUND:
%
%\newcommand{\ground}[2]{
%\node[inner sep=0mm] (#1) at (#2) {};
%\draw[thick]  ($(#2)+(0.3,-0.01)$) -- ($(#2)+(-0.3,-0.01)$);
%\draw[thick]  ($(#2)+(0.23,0.069)$) -- ($(#2)+(-0.22,0.069)$);se
%\draw[thick]  ($(#2)+(0.16,0.139)$) -- ($(#2)+(-0.16,0.139)$);
%\draw[thick]  ($(#2)+(0.09,0.209)$) -- ($(#2)+(-0.09,0.209)$);
%\draw[thick]  ($(#2)+(0.02,0.279)$) -- ($(#2)+(-0.02,0.279)$);
%}
%
%\newcommand{\sground}[2]{
%\node[inner sep=0mm] (#1) at (#2) {};
%\draw[thick]  ($(#2)+(0.2,-0.01)$) -- ($(#2)+(-0.2,-0.01)$);
%\draw[thick]  ($(#2)+(0.12,0.069)$) -- ($(#2)+(-0.12,0.069)$);
%\draw[thick]  ($(#2)+(0.04,0.139)$) -- ($(#2)+(-0.04,0.139)$);
%}

%%%%%%%%%%%%%%%%%%%%%%%%%%%%%%%%%

\tikzset{->-/.style={decoration={
  markings,
  mark=at position .5 with {\arrow{>}}},postaction={decorate}}}
\tikzset{-<-/.style={decoration={
  markings,
  mark=at position .5 with {\arrow{<}}},postaction={decorate}}}

\tikzstyle{bwSpider}=[
       rectangle split,
       rectangle split parts=2,
       rectangle split part fill={black,white},
 minimum size=3.6 mm, inner sep=-2mm, draw=black,scale=0.5,rounded corners=0.8 mm
       ]
 \tikzstyle{wbSpider}=[
       rectangle split,
       rectangle split parts=2,
       rectangle split part fill={white,black},
 minimum size=3.6 mm, inner sep=-2mm, draw=black,scale=0.5,rounded corners=0.8 mm
       ]

\tikzstyle{every picture}=[baseline=-0.25em,scale=0.5]
\tikzstyle{dotpic}=[] % for backwards-compatibility
\tikzstyle{diredges}=[every to/.style={diredge}]
\tikzstyle{math matrix}=[matrix of math nodes,left delimiter=(,right delimiter=),inner sep=2pt,column sep=1em,row sep=0.5em,nodes={inner sep=0pt},text height=1.5ex, text depth=0.25ex]

% ==========
% = LABELS =
% ==========

\tikzstyle{inline text}=[text height=1.5ex, text depth=0.25ex,yshift=0.5mm]
\tikzstyle{label}=[font=\footnotesize,text height=1.5ex, text depth=0.25ex,yshift=0.5mm]
\tikzstyle{left label}=[label,anchor=east,xshift=1.5mm]
\tikzstyle{right label}=[label,anchor=west,xshift=-1mm]
\tikzstyle{up label}=[label,anchor=south,yshift=-1mm]

% create a white box of the given tikz size

\tikzstyle{braceedge}=[decorate,decoration={brace,amplitude=2mm,raise=-1mm}]
\tikzstyle{small braceedge}=[decorate,decoration={brace,amplitude=1mm,raise=-1mm}]

\tikzstyle{doubled}=[line width=1.6pt] % set the line width for all doubled (quantum) maps/wires
\tikzstyle{boldedge}=[doubled,shorten <=-0.17mm,shorten >=-0.17mm]
\tikzstyle{boldedgegray}=[doubled,gray,shorten <=-0.17mm,shorten >=-0.17mm]
\tikzstyle{singleedgegray}=[gray]%,shorten <=-0.1mm,shorten >=-0.1mm]

\tikzstyle{semidoubled}=[line width=1.4pt] % set the line width for all doubled (quantum) maps/wires
\tikzstyle{semiboldedgegray}=[semidoubled,gray,shorten <=-0.17mm,shorten >=-0.17mm]

\tikzstyle{boxedge}=[semiboldedgegray]

\tikzstyle{boldedgedashed}=[very thick,dashed,shorten <=-0.17mm,shorten >=-0.17mm]
\tikzstyle{vboldedgedashed}=[doubled,dashed,shorten <=-0.17mm,shorten >=-0.17mm]
\tikzstyle{left hook arrow}=[left hook-latex]
\tikzstyle{right hook arrow}=[right hook-latex]
\tikzstyle{sembracket}=[line width=0.5pt,shorten <=-0.07mm,shorten >=-0.07mm]

\tikzstyle{causal edge}=[->,thick,gray]
\tikzstyle{causal nondir}=[thick,gray]
\tikzstyle{timeline}=[thick,gray, dashed]

% edges for (symmetric) correspondences/correlations
\tikzstyle{cedge}=[<->,thick,gray!70!white]

\tikzstyle{empty diagram}=[draw=gray!40!white,dashed,shape=rectangle,minimum width=1cm,minimum height=1cm]
\tikzstyle{empty diagram small}=[draw=gray!50!white,dashed,shape=rectangle,minimum width=0.6cm,minimum height=0.5cm]

% ================
% = VARIOUS DOTS =
% ================

\tikzstyle{dot}=[inner sep=0mm,minimum width=2mm,minimum height=2mm,draw,shape=circle]
\tikzstyle{bigdot}=[inner sep=0mm,minimum width=5mm,minimum height=5mm,draw,shape=circle]
\tikzstyle{leak}=[white dot, shape=regular polygon, minimum size=3.3 mm, regular polygon sides=3, outer sep=-0.2mm, regular polygon rotate=270]
%\tikzstyle{proj}=[white dot, shape=regular polygon, minimum size=3.3 mm, regular polygon sides=4, outer sep=-0.2mm]
\tikzstyle{proj}=[regular polygon,regular polygon sides=4,draw,scale=0.75,inner sep=-0.5pt,minimum width=6mm,fill=white]
\tikzstyle{projOut}=[regular polygon,regular polygon sides=3,draw,scale=0.75,inner sep=-0.5pt,minimum width=7.5mm,fill=white,regular polygon rotate=180]
\tikzstyle{projIn}=[regular polygon,regular polygon sides=3,draw,scale=0.75,inner sep=-0.5pt,minimum width=7.5mm,fill=white]
\tikzstyle{Vleak}=[white dot, shape=regular polygon, minimum size=3.3 mm, regular polygon sides=3, outer sep=-0.2mm, regular polygon rotate=90]
\tikzstyle{dleak}=[white dot, line width=1.6pt, shape=regular polygon, minimum size=3.3 mm, regular polygon sides=3, outer sep=-0.2mm, regular polygon rotate=270]

\tikzstyle{Wsquare}=[white dot, shape=regular polygon, rounded corners=0.8 mm, minimum size=3.3 mm, regular polygon sides=3, outer sep=-0.2mm]
\tikzstyle{Wsquareadj}=[white dot, shape=regular polygon, rounded corners=0.8 mm, minimum size=3.3 mm, regular polygon sides=3, outer sep=-0.2mm, regular polygon rotate=180]
% \tikzstyle{ddot}=[inner sep=0.6mm, double=white, very thick, double distance=1pt, minimum width=2.5mm,minimum height=2.5mm,draw,shape=circle]
\tikzstyle{ddot}=[inner sep=0mm, doubled, minimum width=2.5mm,minimum height=2.5mm,draw,shape=circle]

\tikzstyle{black dot}=[dot,fill=black]
\tikzstyle{white Wsquare}=[Wsquare,fill=gray,text depth=-0.2mm]
\tikzstyle{white Wsquareadj}=[Wsquareadj,fill=white,text depth=-0.2mm]
\tikzstyle{green dot}=[white dot] % for backwards-compatibility
\tikzstyle{gray dot}=[dot,fill=gray,text depth=-0.2mm]
\tikzstyle{red dot}=[gray dot] % for backwards-compatibility

% \tikzstyle{red point}=[point,fill=red,font=\color{white}]
% \tikzstyle{red dpoint}=[dpoint,fill=red,font=\color{white}]
% \tikzstyle{red dot}=[dot,fill=red,font=\color{white}]
% \tikzstyle{red ddot}=[ddot,fill=red,font=\color{white}]

\tikzstyle{black ddot}=[ddot,fill=black]
\tikzstyle{white ddot}=[ddot,fill=white]
\tikzstyle{gray ddot}=[ddot,fill=gray!40!white]

\tikzstyle{gray edge}=[gray!60!white]

\tikzstyle{small dot}=[inner sep=0.2mm,minimum width=0pt,minimum height=0pt,draw,shape=circle]

\tikzstyle{small black dot}=[small dot,fill=black]
\tikzstyle{small white dot}=[small dot,fill=white]
\tikzstyle{small gray dot}=[small dot,fill=gray,draw=gray]

\tikzstyle{causal dot}=[inner sep=0.4mm,minimum width=0pt,minimum height=0pt,draw=white,shape=circle,fill=gray!40!white]

%\tikzstyle{phase dimensions}=[font=\footnotesize,inner sep=0.5pt,minimum width=5mm,minimum height=5mm]

\tikzstyle{phase dimensions}=[minimum size=5mm,font=\footnotesize,rectangle,rounded corners=2.5mm,inner sep=0.2mm,outer sep=-2mm]
%,outer sep=-2mm,text height=1ex, text depth=0.25ex,
\tikzstyle{dphase dimensions}=[minimum size=5mm,font=\footnotesize,rectangle,rounded corners=2.5mm,inner sep=0.2mm,outer sep=-2mm]
%\tikzstyle{dphase dimensions}=[minimum size=5mm,font=\footnotesize,rectangle,rounded corners=2.5mm,inner sep=0.2mm,outer sep=-2mm]

\tikzstyle{white phase dot}=[dot,fill=white,phase dimensions]
\tikzstyle{white phase ddot}=[ddot,fill=white,dphase dimensions]

\tikzstyle{white rect ddot}=[draw=black,fill=white,doubled,minimum size=5mm,font=\footnotesize,rectangle,rounded corners=2.5mm,inner sep=0.2mm]
\tikzstyle{gray rect ddot}=[draw=black,fill=gray!40!white,doubled,minimum size=6mm,font=\footnotesize,rectangle,rounded corners=3mm]

\tikzstyle{gray phase dot}=[dot,fill=gray!40!white,phase dimensions]
\tikzstyle{gray phase ddot}=[ddot,fill=gray!40!white,dphase dimensions]
\tikzstyle{grey phase dot}=[gray phase dot]
\tikzstyle{grey phase ddot}=[gray phase ddot]

\tikzstyle{small phase dimensions}=[minimum size=4mm,font=\tiny,rectangle,rounded corners=2mm,inner sep=0.2mm,outer sep=-2mm]
\tikzstyle{small dphase dimensions}=[minimum size=4mm,font=\tiny,rectangle,rounded corners=2mm,inner sep=0.2mm,outer sep=-2mm]

\tikzstyle{small gray phase dot}=[dot,fill=gray!40!white,small phase dimensions]
\tikzstyle{small gray phase ddot}=[ddot,fill=gray!40!white,small dphase dimensions]

% =======================
% = OTHER KINDS OF MAPS =
% =======================

\tikzstyle{small map}=[draw,shape=rectangle,minimum height=4mm,minimum width=4mm,fill=white]

\tikzstyle{cnot}=[fill=white,shape=circle,inner sep=-1.4pt]

\tikzstyle{asym hadamard}=[fill=white,draw,shape=NEbox,inner sep=0.6mm,font=\footnotesize,minimum height=4mm]
\tikzstyle{asym hadamard conj}=[fill=white,draw,shape=NWbox,inner sep=0.6mm,font=\footnotesize,minimum height=4mm]
\tikzstyle{asym hadamard dag}=[fill=white,draw,shape=SEbox,inner sep=0.6mm,font=\footnotesize,minimum height=4mm]

\tikzstyle{hadamard}=[fill=white,draw,inner sep=0.6mm,font=\footnotesize,minimum height=4mm,minimum width=4mm]
\tikzstyle{small hadamard}=[fill=white,draw,inner sep=0.6mm,minimum height=1.5mm,minimum width=1.5mm]
\tikzstyle{small hadamard rotate}=[small hadamard,rotate=45]
\tikzstyle{dhadamard}=[hadamard,doubled]
\tikzstyle{small dhadamard}=[small hadamard,doubled]
\tikzstyle{small dhadamard rotate}=[small hadamard rotate,doubled]
\tikzstyle{antipode}=[white dot,inner sep=0.3mm,font=\footnotesize]

\tikzstyle{small gray box}=[small box,fill=gray!30]
\tikzstyle{medium box}=[rectangle,inline text,fill=white,draw,minimum height=5mm,yshift=-0.5mm,minimum width=10mm,font=\small]
\tikzstyle{square box}=[small box] % for backwards-compatibility
\tikzstyle{medium gray box}=[small box,fill=gray!30]
\tikzstyle{semilarge box}=[rectangle,inline text,fill=white,draw,minimum height=5mm,yshift=-0.5mm,minimum width=12.5mm,font=\small]
\tikzstyle{large box}=[rectangle,inline text,fill=white,draw,minimum height=5mm,yshift=-0.5mm,minimum width=15mm,font=\small]
\tikzstyle{large gray box}=[small box,fill=gray!30]

\tikzstyle{Bayes box}=[rectangle,fill=black,draw, minimum height=3mm, minimum width=3mm]

\tikzstyle{gray square point}=[small box,fill=gray!50]

\tikzstyle{dphase box white}=[dhadamard]
\tikzstyle{dphase box gray}=[dhadamard,fill=gray!50!white]
\tikzstyle{phase box white}=[hadamard]
\tikzstyle{phase box gray}=[hadamard,fill=gray!50!white]

% \tikzstyle{point}=[regular polygon,regular polygon sides=3,draw,inner sep=-0.65pt,minimum width=8mm,fill=white,regular polygon rotate=180]
% \tikzstyle{copoint}=[regular polygon,regular polygon sides=3,draw,inner sep=-0.65pt,minimum width=8mm,fill=white]
\tikzstyle{point nosep}=[regular polygon,regular polygon sides=3,draw,scale=0.75,inner sep=-2pt,minimum width=9mm,fill=white,regular polygon rotate=180]
\tikzstyle{dpoint}=[point,doubled]
\tikzstyle{dcopoint}=[copoint,doubled]

\tikzstyle{pointgrow}=[shape=cornerpoint,kpoint common,scale=0.75,inner sep=3pt]
\tikzstyle{pointgrow dag}=[shape=cornercopoint,kpoint common,scale=0.75,inner sep=3pt]

\tikzstyle{wide copoint}=[fill=white,draw,shape=isosceles triangle,shape border rotate=90,isosceles triangle stretches=true,inner sep=0pt,minimum width=1.5cm,minimum height=6.12mm]
\tikzstyle{wide point}=[fill=white,draw,shape=isosceles triangle,shape border rotate=-90,isosceles triangle stretches=true,inner sep=0pt,minimum width=1.5cm,minimum height=6.12mm,yshift=-0.0mm]
\tikzstyle{wide point plus}=[fill=white,draw,shape=isosceles triangle,shape border rotate=-90,isosceles triangle stretches=true,inner sep=0pt,minimum width=1.74cm,minimum height=7mm,yshift=-0.0mm]

\tikzstyle{wide dpoint}=[fill=white,doubled,draw,shape=isosceles triangle,shape border rotate=-90,isosceles triangle stretches=true,inner sep=0pt,minimum width=1.5cm,minimum height=6.12mm,yshift=-0.0mm]

\tikzstyle{tinypoint}=[regular polygon,regular polygon sides=3,draw,scale=0.55,inner sep=-0.15pt,minimum width=6mm,fill=white,regular polygon rotate=180]

\tikzstyle{white point}=[point]
\tikzstyle{white dpoint}=[dpoint]
\tikzstyle{green point}=[white point] % for backwards-compatibility
\tikzstyle{white copoint}=[copoint]
\tikzstyle{gray point}=[point,fill=gray!40!white]
\tikzstyle{gray dpoint}=[gray point,doubled]
\tikzstyle{red point}=[gray point] % for backwards-compatibility
\tikzstyle{gray copoint}=[copoint,fill=gray!40!white]
\tikzstyle{gray dcopoint}=[gray copoint,doubled]

\tikzstyle{white point guide}=[regular polygon,regular polygon sides=3,font=\scriptsize,draw,scale=0.65,inner sep=-0.5pt,minimum width=9mm,fill=white,regular polygon rotate=180]

\tikzstyle{black point}=[point,fill=black,font=\color{white}]
\tikzstyle{black copoint}=[copoint,fill=black,font=\color{white}]

\tikzstyle{tiny gray point}=[tinypoint,fill=gray!40!white]

\tikzstyle{diredge}=[->]
\tikzstyle{ddiredge}=[<->]
\tikzstyle{rdiredge}=[<-]
\tikzstyle{thickdiredge}=[->, very thick]
\tikzstyle{pointer edge}=[->,very thick,gray]
\tikzstyle{pointer edge part}=[very thick,gray]
\tikzstyle{dashed edge}=[dashed]
\tikzstyle{thick dashed edge}=[very thick,dashed]
\tikzstyle{thick map edge}=[very thick,|->]

% =======================
% = PARALLELAGRAM BOXES =
% =======================

\makeatletter
\newcommand{\boxshape}[3]{%
\pgfdeclareshape{#1}{
\inheritsavedanchors[from=rectangle] % this is nearly a rectangle
\inheritanchorborder[from=rectangle]
\inheritanchor[from=rectangle]{center}
\inheritanchor[from=rectangle]{north}
\inheritanchor[from=rectangle]{south}
\inheritanchor[from=rectangle]{west}
\inheritanchor[from=rectangle]{east}
% ... and possibly more
\backgroundpath{% this is new
% store lower right in xa/ya and upper right in xb/yb
\southwest \pgf@xa=\pgf@x \pgf@ya=\pgf@y
\northeast \pgf@xb=\pgf@x \pgf@yb=\pgf@y

\@tempdima=#2
\@tempdimb=#3

\pgfpathmoveto{\pgfpoint{\pgf@xa - 5pt + \@tempdima}{\pgf@ya}}
\pgfpathlineto{\pgfpoint{\pgf@xa - 5pt - \@tempdima}{\pgf@yb}}
\pgfpathlineto{\pgfpoint{\pgf@xb + 5pt + \@tempdimb}{\pgf@yb}}
\pgfpathlineto{\pgfpoint{\pgf@xb + 5pt - \@tempdimb}{\pgf@ya}}
\pgfpathlineto{\pgfpoint{\pgf@xa - 5pt + \@tempdima}{\pgf@ya}}
\pgfpathclose
}
}}
\newcommand{\smallboxshape}[3]{%
\pgfdeclareshape{#1}{
\inheritsavedanchors[from=rectangle] % this is nearly a rectangle
\inheritanchorborder[from=rectangle]
\inheritanchor[from=rectangle]{center}
\inheritanchor[from=rectangle]{north}
\inheritanchor[from=rectangle]{south}
\inheritanchor[from=rectangle]{west}
\inheritanchor[from=rectangle]{east}
% ... and possibly more
\backgroundpath{% this is new
% store lower right in xa/ya and upper right in xb/yb
\southwest \pgf@xa=\pgf@x \pgf@ya=\pgf@y
\northeast \pgf@xb=\pgf@x \pgf@yb=\pgf@y

\@tempdima=#2
\@tempdimb=#3

\pgfpathmoveto{\pgfpoint{\pgf@xa - 3pt + \@tempdima}{\pgf@ya}}
\pgfpathlineto{\pgfpoint{\pgf@xa - 3pt - \@tempdima}{\pgf@yb}}
\pgfpathlineto{\pgfpoint{\pgf@xb + 3pt + \@tempdimb}{\pgf@yb}}
\pgfpathlineto{\pgfpoint{\pgf@xb + 3pt - \@tempdimb}{\pgf@ya}}
\pgfpathlineto{\pgfpoint{\pgf@xa - 3pt + \@tempdima}{\pgf@ya}}
\pgfpathclose
}
}}

%\boxshape{NEbox}{0pt}{5pt}
\boxshape{NEbox}{0pt}{3pt}
%\boxshape{SEbox}{0pt}{-5pt}
\boxshape{SEbox}{0pt}{-3pt}
\boxshape{NWbox}{5pt}{0pt}
\boxshape{SWbox}{-5pt}{0pt}
\smallboxshape{SWboxSmall}{-3pt}{0pt}
\boxshape{EBox}{-3pt}{3pt}
\boxshape{WBox}{3pt}{-3pt}
\makeatother

\tikzstyle{cloud}=[shape=cloud,draw,minimum width=1.5cm,minimum height=1.5cm]

%\tikzstyle{map}=[draw,shape=NEbox,inner sep=2pt,minimum height=6mm,fill=white]
\tikzstyle{map}=[draw,shape=NEbox,inner sep=1pt,minimum height=4mm,fill=white]
\tikzstyle{dashedmap}=[draw,dashed,shape=NEbox,inner sep=2pt,minimum height=6mm,fill=white]
%\tikzstyle{mapdag}=[draw,shape=SEbox,inner sep=2pt,minimum height=6mm,fill=white]
\tikzstyle{mapdag}=[draw,shape=SEbox,inner sep=1pt,minimum height=4mm,fill=white]
\tikzstyle{mapadj}=[draw,shape=SEbox,inner sep=2pt,minimum height=6mm,fill=white]
\tikzstyle{maptrans}=[draw,shape=SWbox,inner sep=2pt,minimum height=6mm,fill=white]
\tikzstyle{mapconj}=[draw,shape=NWbox,inner sep=2pt,minimum height=6mm,fill=white]

\tikzstyle{medium map}=[draw,shape=NEbox,inner sep=2pt,minimum height=6mm,fill=white,minimum width=7mm]
\tikzstyle{medium map dag}=[draw,shape=SEbox,inner sep=2pt,minimum height=6mm,fill=white,minimum width=7mm]
\tikzstyle{medium map adj}=[draw,shape=SEbox,inner sep=2pt,minimum height=6mm,fill=white,minimum width=7mm]
\tikzstyle{medium map trans}=[draw,shape=SWbox,inner sep=2pt,minimum height=6mm,fill=white,minimum width=7mm]
\tikzstyle{medium map conj}=[draw,shape=NWbox,inner sep=2pt,minimum height=6mm,fill=white,minimum width=7mm]
\tikzstyle{semilarge map}=[draw,shape=NEbox,inner sep=2pt,minimum height=6mm,fill=white,minimum width=9.5mm]
\tikzstyle{semilarge map trans}=[draw,shape=SWbox,inner sep=2pt,minimum height=6mm,fill=white,minimum width=9.5mm]
\tikzstyle{semilarge map adj}=[draw,shape=SEbox,inner sep=2pt,minimum height=6mm,fill=white,minimum width=9.5mm]
\tikzstyle{semilarge map dag}=[draw,shape=SEbox,inner sep=2pt,minimum height=6mm,fill=white,minimum width=9.5mm]
\tikzstyle{semilarge map conj}=[draw,shape=NWbox,inner sep=2pt,minimum height=6mm,fill=white,minimum width=9.5mm]
\tikzstyle{large map}=[draw,shape=NEbox,inner sep=2pt,minimum height=6mm,fill=white,minimum width=12mm]
\tikzstyle{large map conj}=[draw,shape=NWbox,inner sep=2pt,minimum height=6mm,fill=white,minimum width=12mm]
\tikzstyle{very large map}=[draw,shape=NEbox,inner sep=2pt,minimum height=6mm,fill=white,minimum width=17mm]

\tikzstyle{medium dmap}=[draw,doubled,shape=NEbox,inner sep=2pt,minimum height=6mm,fill=white,minimum width=7mm]
\tikzstyle{medium dmap dag}=[draw,doubled,shape=SEbox,inner sep=2pt,minimum height=6mm,fill=white,minimum width=7mm]
\tikzstyle{medium dmap adj}=[draw,doubled,shape=SEbox,inner sep=2pt,minimum height=6mm,fill=white,minimum width=7mm]
\tikzstyle{medium dmap trans}=[draw,doubled,shape=SWbox,inner sep=2pt,minimum height=6mm,fill=white,minimum width=7mm]
\tikzstyle{medium dmap conj}=[draw,doubled,shape=NWbox,inner sep=2pt,minimum height=6mm,fill=white,minimum width=7mm]
\tikzstyle{semilarge dmap}=[draw,doubled,shape=NEbox,inner sep=2pt,minimum height=6mm,fill=white,minimum width=9.5mm]
\tikzstyle{semilarge dmap trans}=[draw,doubled,shape=SWbox,inner sep=2pt,minimum height=6mm,fill=white,minimum width=9.5mm]
\tikzstyle{semilarge dmap adj}=[draw,doubled,shape=SEbox,inner sep=2pt,minimum height=6mm,fill=white,minimum width=9.5mm]
\tikzstyle{semilarge dmap dag}=[draw,doubled,shape=SEbox,inner sep=2pt,minimum height=6mm,fill=white,minimum width=9.5mm]
\tikzstyle{semilarge dmap conj}=[draw,doubled,shape=NWbox,inner sep=2pt,minimum height=6mm,fill=white,minimum width=9.5mm]
\tikzstyle{large dmap}=[draw,doubled,shape=NEbox,inner sep=2pt,minimum height=6mm,fill=white,minimum width=12mm]
\tikzstyle{large dmap conj}=[draw,doubled,shape=NWbox,inner sep=2pt,minimum height=6mm,fill=white,minimum width=12mm]
\tikzstyle{large dmap trans}=[draw,doubled,shape=SWbox,inner sep=2pt,minimum height=6mm,fill=white,minimum width=12mm]
\tikzstyle{large dmap adj}=[draw,doubled,shape=SEbox,inner sep=2pt,minimum height=6mm,fill=white,minimum width=12mm]
\tikzstyle{large dmap dag}=[draw,doubled,shape=SEbox,inner sep=2pt,minimum height=6mm,fill=white,minimum width=12mm]
\tikzstyle{very large dmap}=[draw,doubled,shape=NEbox,inner sep=2pt,minimum height=6mm,fill=white,minimum width=19.5mm]

\tikzstyle{muxbox}=[draw,shape=rectangle,minimum height=3mm,minimum width=3mm,fill=white]
\tikzstyle{dmuxbox}=[muxbox,doubled]

\tikzstyle{box}=[draw,shape=rectangle,inner sep=2pt,minimum height=6mm,minimum width=6mm,fill=white]
\tikzstyle{dbox}=[draw,doubled,shape=rectangle,inner sep=2pt,minimum height=6mm,minimum width=6mm,fill=white]
\tikzstyle{dmap}=[draw,doubled,shape=NEbox,inner sep=2pt,minimum height=6mm,fill=white]
\tikzstyle{dmapdag}=[draw,doubled,shape=SEbox,inner sep=2pt,minimum height=6mm,fill=white]
\tikzstyle{dmapadj}=[draw,doubled,shape=SEbox,inner sep=2pt,minimum height=6mm,fill=white]
\tikzstyle{dmaptrans}=[draw,doubled,shape=SWbox,inner sep=2pt,minimum height=6mm,fill=white]
\tikzstyle{dmapconj}=[draw,doubled,shape=NWbox,inner sep=2pt,minimum height=6mm,fill=white]

\tikzstyle{ddmap}=[draw,doubled,dashed,shape=NEbox,inner sep=2pt,minimum height=6mm,fill=white]
\tikzstyle{ddmapdag}=[draw,doubled,dashed,shape=SEbox,inner sep=2pt,minimum height=6mm,fill=white]
\tikzstyle{ddmapadj}=[draw,doubled,dashed,shape=SEbox,inner sep=2pt,minimum height=6mm,fill=white]
\tikzstyle{ddmaptrans}=[draw,doubled,dashed,shape=SWbox,inner sep=2pt,minimum height=6mm,fill=white]
\tikzstyle{ddmapconj}=[draw,doubled,dashed,shape=NWbox,inner sep=2pt,minimum height=6mm,fill=white]

\boxshape{sNEbox}{0pt}{3pt}
\boxshape{sSEbox}{0pt}{-3pt}
\boxshape{sNWbox}{3pt}{0pt}
\boxshape{sSWbox}{-3pt}{0pt}
\tikzstyle{smap}=[draw,shape=sNEbox,fill=white]
\tikzstyle{smapdag}=[draw,shape=sSEbox,fill=white]
\tikzstyle{smapadj}=[draw,shape=sSEbox,fill=white]
\tikzstyle{smaptrans}=[draw,shape=sSWbox,fill=white]
\tikzstyle{smapconj}=[draw,shape=sNWbox,fill=white]

\tikzstyle{dsmap}=[draw,dashed,shape=sNEbox,fill=white]
\tikzstyle{dsmapdag}=[draw,dashed,shape=sSEbox,fill=white]
\tikzstyle{dsmaptrans}=[draw,dashed,shape=sSWbox,fill=white]
\tikzstyle{dsmapconj}=[draw,dashed,shape=sNWbox,fill=white]

\boxshape{mNEbox}{0pt}{10pt}
\boxshape{mSEbox}{0pt}{-10pt}
\boxshape{mNWbox}{10pt}{0pt}
\boxshape{mSWbox}{-10pt}{0pt}
\tikzstyle{mmap}=[draw,shape=mNEbox]
\tikzstyle{mmapdag}=[draw,shape=mSEbox]
\tikzstyle{mmaptrans}=[draw,shape=mSWbox]
\tikzstyle{mmapconj}=[draw,shape=mNWbox]

\tikzstyle{mmapgray}=[draw,fill=gray!40!white,shape=mNEbox]
\tikzstyle{smapgray}=[draw,fill=gray!40!white,shape=sNEbox]

\makeatletter

\pgfdeclareshape{cornerpoint}{
\inheritsavedanchors[from=rectangle] % this is nearly a rectangle
\inheritanchorborder[from=rectangle]
\inheritanchor[from=rectangle]{center}
\inheritanchor[from=rectangle]{north}
\inheritanchor[from=rectangle]{south}
\inheritanchor[from=rectangle]{west}
\inheritanchor[from=rectangle]{east}
% ... and possibly more
\backgroundpath{% this is new
% store lower right in xa/ya and upper right in xb/yb
\southwest \pgf@xa=\pgf@x \pgf@ya=\pgf@y
\northeast \pgf@xb=\pgf@x \pgf@yb=\pgf@y

\pgfmathsetmacro{\pgf@shorten@left}{\pgfkeysvalueof{/tikz/shorten left}}
\pgfmathsetmacro{\pgf@shorten@right}{\pgfkeysvalueof{/tikz/shorten right}}

\pgfpathmoveto{\pgfpoint{0.5 * (\pgf@xa + \pgf@xb)}{\pgf@ya - 5pt}}
\pgfpathlineto{\pgfpoint{\pgf@xa - 8pt + \pgf@shorten@left}{\pgf@yb - 1.5 * \pgf@shorten@left}}
\pgfpathlineto{\pgfpoint{\pgf@xa - 8pt + \pgf@shorten@left}{\pgf@yb}}
\pgfpathlineto{\pgfpoint{\pgf@xb + 8pt - \pgf@shorten@right}{\pgf@yb}}
\pgfpathlineto{\pgfpoint{\pgf@xb + 8pt - \pgf@shorten@right}{\pgf@yb - 1.5 * \pgf@shorten@right}}
\pgfpathclose
}
}

\pgfdeclareshape{cornercopoint}{
\inheritsavedanchors[from=rectangle] % this is nearly a rectangle
\inheritanchorborder[from=rectangle]
\inheritanchor[from=rectangle]{center}
\inheritanchor[from=rectangle]{north}
\inheritanchor[from=rectangle]{south}
\inheritanchor[from=rectangle]{west}
\inheritanchor[from=rectangle]{east}
% ... and possibly more
\backgroundpath{% this is new
% store lower right in xa/ya and upper right in xb/yb
\southwest \pgf@xa=\pgf@x \pgf@ya=\pgf@y
\northeast \pgf@xb=\pgf@x \pgf@yb=\pgf@y

\pgfmathsetmacro{\pgf@shorten@left}{\pgfkeysvalueof{/tikz/shorten left}}
\pgfmathsetmacro{\pgf@shorten@right}{\pgfkeysvalueof{/tikz/shorten right}}

\pgfpathmoveto{\pgfpoint{0.5 * (\pgf@xa + \pgf@xb)}{\pgf@yb + 5pt}}
\pgfpathlineto{\pgfpoint{\pgf@xa - 8pt + \pgf@shorten@left}{\pgf@ya + 1.5 * \pgf@shorten@left}}
\pgfpathlineto{\pgfpoint{\pgf@xa - 8pt + \pgf@shorten@left}{\pgf@ya}}
\pgfpathlineto{\pgfpoint{\pgf@xb + 8pt - \pgf@shorten@right}{\pgf@ya}}
\pgfpathlineto{\pgfpoint{\pgf@xb + 8pt - \pgf@shorten@right}{\pgf@ya + 1.5 * \pgf@shorten@right}}
\pgfpathclose
}
}

\makeatother

\pgfkeyssetvalue{/tikz/shorten left}{0pt}
\pgfkeyssetvalue{/tikz/shorten right}{0pt}

\tikzstyle{kpoint common}=[draw,fill=white,inner sep=1pt,minimum height=4mm]
\tikzstyle{kpoint sc}=[shape=cornerpoint,kpoint common]
\tikzstyle{kpoint adjoint sc}=[shape=cornercopoint,kpoint common]
\tikzstyle{kpoint}=[shape=cornerpoint,shorten left=5pt,kpoint common]
\tikzstyle{kpoint adjoint}=[shape=cornercopoint,shorten left=5pt,kpoint common]
\tikzstyle{kpoint conjugate}=[shape=cornerpoint,shorten right=5pt,kpoint common]
\tikzstyle{kpoint transpose}=[shape=cornercopoint,shorten right=5pt,kpoint common]
\tikzstyle{kpoint symm}=[shape=cornerpoint,shorten left=5pt,shorten right=5pt,kpoint common]

\tikzstyle{wide kpoint sc}=[shape=cornerpoint,kpoint common, minimum width=1 cm]
\tikzstyle{wide kpointdag sc}=[shape=cornercopoint,kpoint common, minimum width=1 cm]

\tikzstyle{black kpoint}=[shape=cornerpoint,shorten left=5pt,kpoint common,fill=black,font=\color{white}]

\tikzstyle{black kpoint sm}=[shape=cornerpoint,shorten left=5pt,kpoint common,fill=black,font=\color{white},scale=0.75]

\tikzstyle{black kpoint adjoint}=[shape=cornercopoint,shorten left=5pt,kpoint common,fill=black,font=\color{white}]
\tikzstyle{black kpointadj}=[shape=cornercopoint,shorten left=5pt,kpoint common,fill=black,font=\color{white}]

\tikzstyle{black kpointadj sm}=[shape=cornercopoint,shorten left=5pt,kpoint common,fill=black,font=\color{white},scale=0.75]

\tikzstyle{black dkpoint}=[shape=cornerpoint,shorten left=5pt,kpoint common,fill=black, doubled,font=\color{white}]
\tikzstyle{black dkpoint adjoint}=[shape=cornercopoint,shorten left=5pt,kpoint common,fill=black, doubled,font=\color{white}]
\tikzstyle{black dkpointadj}=[shape=cornercopoint,shorten left=5pt,kpoint common,fill=black, doubled,font=\color{white}]

\tikzstyle{black dkpoint sm}=[shape=cornerpoint,shorten left=5pt,kpoint common,fill=black, doubled,font=\color{white},scale=0.75]
\tikzstyle{black dkpointadj sm}=[shape=cornercopoint,shorten left=5pt,kpoint common,fill=black, doubled,font=\color{white},scale=0.75]

\tikzstyle{kpointdag}=[kpoint adjoint]
\tikzstyle{kpointadj}=[kpoint adjoint]
\tikzstyle{kpointconj}=[kpoint conjugate]
\tikzstyle{kpointtrans}=[kpoint transpose]

\tikzstyle{big kpoint}=[kpoint, minimum width=1.2 cm, minimum height=8mm, inner sep=4pt, text depth=3mm]

\tikzstyle{wide kpoint}=[kpoint, minimum width=1 cm, inner sep=2pt]%, text depth=-0.7 mm]
\tikzstyle{wide kpointdag}=[kpointdag, minimum width=1 cm, inner sep=2pt]%, text depth=0.7 mm]
\tikzstyle{wide kpointconj}=[kpointconj, minimum width=1 cm, inner sep=2pt]%, text depth=-0.7 mm]
\tikzstyle{wide kpointtrans}=[kpointtrans, minimum width=1 cm, inner sep=2pt]%, text depth=0.7 mm]

\tikzstyle{wider kpoint}=[kpoint, minimum width=1.25 cm, inner sep=2pt]%, text depth=-0.7 mm]
\tikzstyle{wider kpointdag}=[kpointdag, minimum width=1.25 cm, inner sep=2pt]%, text depth=0.7 mm]
\tikzstyle{wider kpointconj}=[kpointconj, minimum width=1.25 cm, inner sep=2pt]%, text depth=-0.7 mm]
\tikzstyle{wider kpointtrans}=[kpointtrans, minimum width=1.25 cm, inner sep=2pt]%, text depth=0.7 mm]

\tikzstyle{gray kpoint}=[kpoint,fill=gray!50!white]
\tikzstyle{gray kpointdag}=[kpointdag,fill=gray!50!white]
\tikzstyle{gray kpointadj}=[kpointadj,fill=gray!50!white]
\tikzstyle{gray kpointconj}=[kpointconj,fill=gray!50!white]
\tikzstyle{gray kpointtrans}=[kpointtrans,fill=gray!50!white]

\tikzstyle{gray dkpoint}=[kpoint,fill=gray!50!white,doubled]
\tikzstyle{gray dkpointdag}=[kpointdag,fill=gray!50!white,doubled]
\tikzstyle{gray dkpointadj}=[kpointadj,fill=gray!50!white,doubled]
\tikzstyle{gray dkpointconj}=[kpointconj,fill=gray!50!white,doubled]
\tikzstyle{gray dkpointtrans}=[kpointtrans,fill=gray!50!white,doubled]

\tikzstyle{white label}=[draw,fill=white,rectangle,inner sep=0.7 mm]
\tikzstyle{gray label}=[draw,fill=gray!50!white,rectangle,inner sep=0.7 mm]
\tikzstyle{black label}=[draw,fill=black,rectangle,inner sep=0.7 mm]

\tikzstyle{dkpoint}=[kpoint,doubled]
\tikzstyle{wide dkpoint}=[wide kpoint,doubled]
\tikzstyle{dkpointdag}=[kpoint adjoint,doubled]
\tikzstyle{wide dkpointdag}=[wide kpointdag,doubled]
\tikzstyle{dkcopoint}=[kpoint adjoint,doubled]
\tikzstyle{dkpointadj}=[kpoint adjoint,doubled]
\tikzstyle{dkpointconj}=[kpoint conjugate,doubled]
\tikzstyle{dkpointtrans}=[kpoint transpose,doubled]

\tikzstyle{kscalar}=[kpoint common, shape=EBox, inner xsep=-1pt, inner ysep=3pt,font=\small]
\tikzstyle{kscalarconj}=[kpoint common, shape=WBox, inner xsep=-1pt, inner ysep=3pt,font=\small]

\tikzstyle{spekpoint}=[kpoint sc,minimum height=5mm,inner sep=3pt]
\tikzstyle{spekcopoint}=[kpoint adjoint sc,minimum height=5mm,inner sep=3pt]

\tikzstyle{dspekpoint}=[spekpoint,doubled]
\tikzstyle{dspekcopoint}=[spekcopoint,doubled]

% ========================
% = GROUND =
% ========================

  %\tikzstyle{ground}=[regular polygon,regular polygon sides=3,draw=gray,scale=0.50,inner sep=-0.5pt,minimum width=5mm,fill=gray]
 \tikzstyle{bigground}=[regular polygon,regular polygon sides=3,draw=gray,scale=0.50,inner sep=-0.5pt,minimum width=10mm,fill=gray]
 %\tikzstyle{grounddag}=[regular polygon,regular polygon sides=3,draw=gray,scale=0.50,inner sep=-0.5pt,minimum width=5mm,fill=gray,regular polygon rotate=180]

% ========================
% = COMMUTATIVE DIAGRAMS =
% ========================

\tikzstyle{arrs}=[-latex,font=\small,auto]
\tikzstyle{arrow plain}=[arrs]
\tikzstyle{arrow dashed}=[dashed,arrs]
\tikzstyle{arrow bold}=[very thick,arrs]
\tikzstyle{arrow hide}=[draw=white!0,-]
\tikzstyle{arrow reverse}=[latex-]
\tikzstyle{cdnode}=[]
\tikzstyle{dscalar}=[diamond,doubled, draw,inner sep=0.5pt,font=\small]

%****************USEFUL CAUSAL INFERENTIAL***********************

\tikzstyle{epiCopoint}=[regular polygon,regular polygon sides=3,draw,scale=0.75,inner sep=-0.5pt,minimum width=5mm,fill=white,regular polygon rotate=0,line width=1pt]
\tikzstyle{epiPoint}=[regular polygon,regular polygon sides=3,draw,scale=0.75,inner sep=-0.5pt,minimum width=5mm,fill=white,regular polygon rotate=180,line width=1pt]
\tikzstyle{epiPointWide}=[regular polygon,regular polygon sides=3,draw,xscale=0.75,yscale=.5,inner sep=-0.5pt,minimum width=8mm,fill=white,regular polygon rotate=180,line width=1pt]
\tikzstyle{epiBox}=[fill=white,draw, line width = 1pt,inner sep=0.6mm,font=\footnotesize,minimum height=3mm,minimum width=3mm]
\tikzstyle{epiBoxWide}=[fill=white,draw, line width = 1pt,inner sep=0.6mm,font=\footnotesize,minimum height=3mm,minimum width=5mm]
\tikzstyle{epiBoxVeryWide}=[fill=white,draw, line width = 1pt,inner sep=0.6mm,font=\footnotesize,minimum height=3mm,minimum width=7mm]

\tikzstyle{clear dot}=[dot,fill=none,text depth=-0.2mm,draw=gray, line width = .75pt]
\tikzstyle{tall clear dot}=[dot,fill=none,text depth=-0.2mm,draw=gray, line width = .75pt,shape=ellipse, minimum height=5mm]
\tikzstyle{wide clear dot}=[dot,fill=none,text depth=-0.2mm,draw=gray, line width = .75pt, shape=ellipse, minimum width = 5mm]
\tikzstyle{very wide clear dot}=[dot,fill=none,text depth=-0.2mm,draw=gray, line width = .75pt, shape=ellipse, minimum width = 7mm ]

\tikzstyle{point}=[regular polygon,regular polygon sides=3,draw,scale=0.75,inner sep=-0.5pt,minimum width=9mm,fill=white,regular polygon rotate=180]
\tikzstyle{infpoint}=[regular polygon,regular polygon sides=3,draw,scale=0.75,inner sep=-0.5pt,minimum width=9mm,fill=white,regular polygon rotate=90]
\tikzstyle{infcopoint}=[regular polygon,regular polygon sides=3,draw,scale=0.75,inner sep=-0.5pt,minimum width=9mm,fill=white,regular polygon rotate=270]
\tikzstyle{copoint}=[regular polygon,regular polygon sides=3,draw,scale=0.75,inner sep=-0.5pt,minimum width=9mm,fill=white]
\tikzstyle{small box}=[rectangle,inline text,fill=white,draw,minimum height=5mm,yshift=-0.5mm,minimum width=5mm,font=\small]
\tikzstyle{scalar}=[diamond,draw,inner sep=0.5pt,font=\small]

\tikzstyle{white dot}=[dot,fill=white,text depth=-0.2mm]
\tikzstyle{uControl}= [draw, shape=diamond, aspect=.5,inner sep=0pt,minimum height=2mm,minimum width=3.5mm,fill=black]%[draw,shape=SWboxSmall,inner sep=0pt,minimum height=2mm,minimum  width=1mm, fill=black]
\tikzstyle{funcApp} = [draw, shape=diamond, aspect=2,inner sep=0pt,minimum height=3.5mm,minimum width=2mm,fill=black]%[draw,shape=regular polygon, regular polygon sides = 4,regular polygon rotate=45, inner sep=0pt,minimum height=3mm,minimum width=1mm, fill=black]%[draw,shape=diamond,inner sep=0pt,minimum height=2mm,minimum  width=1mm, fill=black,rotate=270]%[draw,shape=regular polygon, regular polygon sides = 3,inner sep=0pt,minimum height=3mm,minimum  width=1mm, fill=black]
\tikzstyle{funcSplit} = [draw,shape=regular polygon, regular polygon sides = 3,inner sep=0pt,minimum height=3mm,minimum  width=1mm,regular polygon rotate=210, fill=black]
\tikzstyle{copy} = [black dot]
\tikzstyle{infMerge} = [draw=gray,fill=gray,shape=regular polygon, regular polygon sides = 3,inner sep=0pt,minimum height=2.5mm,minimum  width=1mm,regular polygon rotate=30]
\tikzstyle{infSplit}= [draw=gray,fill=gray,shape=regular polygon, regular polygon sides = 3,inner sep=0pt,minimum height=2.5mm,minimum  width=1mm,regular polygon rotate=210]
\tikzstyle{seqComp}=[white dot]
\tikzstyle{parComp}=[circuit ee IEC, bulb,fill=white]
\tikzstyle{addStar}=[draw, shape=star, star points=5, star rotate=90,minimum size=2mm, inner sep=0pt,fill=black]
\tikzstyle{removeStar}=[draw, shape=star, star points=5, star rotate =270,minimum size=2mm, inner sep=0pt,fill=white]

\tikzstyle{upground}=[circuit ee IEC,thick,ground,rotate=90,xscale=2.5,yscale=2]
 \tikzstyle{downground}=[circuit ee IEC,thick,ground,rotate=-90,xscale=2.5,yscale=2]
 \tikzstyle{infupground}=[circuit ee IEC,thick,ground,rotate=0,xscale=2.5,yscale=2]
  \tikzstyle{ignore}=[circuit ee IEC,thick,ground,rotate=0,xscale=1.5,yscale=1,xshift=5pt]
 \tikzstyle{infdownground}=[circuit ee IEC,thick,ground,rotate=180,xscale=2.5,yscale=2]

\tikzstyle{oWire}=[line width = .75pt, color=green!40!black!70!]
\tikzstyle{qWire}=[line width = 1pt, color=black]
\tikzstyle{cWire}=[color=gray,line width = .75pt]
\tikzstyle{thick gray dashed edge}=[thick dashed edge,gray!40]

\let\olddagger\dagger
\renewcommand{\dagger}{\ensuremath{\olddagger}\xspace}

% indexes
% uncomment the relevant set of commands

% for a single index
\usepackage{makeidx}
\makeindex

% for multiple indexes using multind.sty
  % \usepackage{multind}\ProvidesPackage{multind}
  % \makeindex{authors}
  % \makeindex{subject}

% for multiple indexes using index.sty
% \usepackage{index}
% \newindex{aut}{adx}{and}{Author index}
% \makeindex

%\newcommand\cambridge{cambridge6A}

% OURS
\theoremstyle{plain}
\newtheorem*{main theorem}{Main Theorem}
\newtheorem{theorem}{Theorem}[section]

\newtheorem{lemma}[theorem]{Lemma}

\newtheorem{definition}[theorem]{Definition}

\newtheorem{example*}[theorem]{Example*}
\newtheorem{examples*}[theorem]{Examples*}

\newtheorem{remark*}[theorem]{Remark*}

\newtheorem*{search problem}{Search Problem}

%\newtheoremstyle{exercise}{3pt}{3pt}{\color{red}}{}{\bf}{}{.5em}{}
%\theoremstyle{exercise}

\hyphenation{line-break line-breaks docu-ment triangle cambridge amsthdoc
  cambridgemods baseline-skip author authors cambridgestyle en-vir-on-ment polar}

%% helper macros

%begin Bob's
\usepackage{color}
\def\bR{\begin{color}{red}}
\def\bB{\begin{color}{blue}}
\def\bM{\begin{color}{magenta}}
\def\bC{\begin{color}{cyan}}
\def\bW{\begin{color}{white}}
\def\bBl{\begin{color}{black}}
\def\bG{\begin{color}{green}}
\def\bY{\begin{color}{yellow}}
\def\e{\end{color}\xspace}
\newcommand{\bit}{\begin{itemize}}
\newcommand{\eit}{\end{itemize}\par\noindent}
\newcommand{\ben}{\begin{enumerate}}
\newcommand{\een}{\end{enumerate}\par\noindent}
\newcommand{\beq}{\begin{equation}}
\newcommand{\eeq}{\end{equation}\par\noindent}
\newcommand{\beqa}{\begin{align*}}
\newcommand{\eeqa}{\end{align*}}
\newcommand{\beqn}{\begin{align}}
\newcommand{\eeqn}{\end{align}\par\noindent}
%end Bob's

% hide certain colours

% \def\bR{\begin{color}{black}}
% \def\bB{\begin{color}{black}}
% \def\bM{\begin{color}{black}}
% \def\bC{\begin{color}{black}}
% \def\bW{\begin{color}{black}}
% \def\bG{\begin{color}{black}}
% \def\bY{\begin{color}{black}}

\def\jR{\begin{color}{black}}
\def\jB{\begin{color}{black}}
\def\jM{\begin{color}{magenta}}
\def\jC{\begin{color}{cyan}}
\def\jW{\begin{color}{white}}
\def\jBl{\begin{color}{black}}
\def\jG{\begin{color}{green}}
\def\jY{\begin{color}{yellow}}

% hide certain colours

% \def\jR{\begin{color}{black}}
% \def\jB{\begin{color}{black}}
% \def\jM{\begin{color}{black}}
% \def\jC{\begin{color}{black}}
% \def\jW{\begin{color}{black}}
% \def\jG{\begin{color}{black}}
% \def\jY{\begin{color}{black}}

\title{Indistinguishability in general probabilistic theories}

\author[1,2]{John H.~Selby}
\author[ ]{Victoria J.~Wright\!}
\author[3]{M\'at\'e Farkas}
\author[1,4]{Marcin Karczewski}
\author[1,2,5]{Ana Bel\'en Sainz}
\affil[1]{International Centre for Theory of Quantum Technologies, University of  Gda{\'n}sk, 80-309 Gda{\'n}sk, Poland}
\affil[2]{Theoretical Sciences Visiting Program, Okinawa Institute of Science and Technology Graduate University, Onna, 904-0495, Japan}
\affil[3]{Department of Mathematics, University of York, Heslington, York, YO10 5DD, United Kingdom}
\affil[4]{Institute of Spintronics and Quantum Information, Faculty of Physics, Adam Mickiewicz University, 61-614, Pozna\'n, Poland}
\affil[5]{Basic Research Community for Physics e.V., Germany}

\setcounter{Maxaffil}{0}

\begin{comment}
\author{John H.~Selby}
\affiliation{International Centre for Theory of Quantum Technologies, University of  Gda{\'n}sk, 80-309 Gda{\'n}sk, Poland}
\affiliation{Theoretical Sciences Visiting Program, Okinawa Institute of Science and Technology Graduate University, Onna, 904-0495, Japan}
\author{Victoria J.~Wright}
\affiliation{International Centre for Theory of Quantum Technologies, University of  Gda{\'n}sk, 80-309 Gda{\'n}sk, Poland}
\author{M\'at\'e Farkas}
\affiliation{Department of Mathematics, University of York, Heslington, York, YO10 5DD, United Kingdom}
\author{Marcin Karczewski}
\affiliation{Institute of Spintronics and Quantum Information, Faculty of Physics, Adam Mickiewicz University, 61-614, Pozna\'n, Poland}
\affiliation{International Centre for Theory of Quantum Technologies, University of  Gda{\'n}sk, 80-309 Gda{\'n}sk, Poland}
\author{Ana Bel\'en Sainz}
\affiliation{International Centre for Theory of Quantum Technologies, University of  Gda{\'n}sk, 80-309 Gda{\'n}sk, Poland}
\affiliation{Theoretical Sciences Visiting Program, Okinawa Institute of Science and Technology Graduate University, Onna, 904-0495, Japan}
\affiliation{Basic Research Community for Physics e.V., Germany}
\end{comment}

\date{\today}

\begin{document}

\maketitle
\begin{abstract}
The existence of indistinguishable quantum particles provides an explanation for various physical phenomena we observe in nature. We lay out a path for the study of indistinguishable particles in general probabilistic theories (GPTs) via two frameworks: the traditional GPT framework and the diagrammatic framework of process theories. In the first approach we define different types of indistinguishable particle by the orbits of symmetric states under transformations. In the diagrammatic approach, we find a decomposition of the symmetrised state space using two key constructions from category theory: the biproduct completion and the Karoubi envelope. In both cases for pairs of indistinguishable particles in quantum theory we recover bosons and fermions. 

%By defining and studying the theory of indistinguishable particles in general probabilistic theories, we will probe how indistinguishability relates to other aspects of quantum theory from a theory-independent perspective. We will further establish whether indistinguishable particles could have existed in a world described by an alternative physical theory and, if so, how they would have differed from their quantum analogues. 
\end{abstract}

\tableofcontents

\section{Introduction}

Consider any pair of elementary particles of the same species, e.g., a pair of electrons or a pair of photons. 
%\bel If one looks at their states, one will find these particles to be \emph{identical}. \mk{[I think that this sentence is a bit vague - what are 'their states'? Let's maybe drop it, the following explains the idea. ]} \blk That is, 
Each of the two particles will have identical intrinsic properties such as mass, charge, and spin. If one wishes to distinguish two such identical particles, it follows that one must turn to their extrinsic properties, such as position, momentum, or relative spin direction. 

Quantum theory tells us, however, that pairs of identical particles %\mk{\sout{can}} 
reside in states with \emph{exchange symmetry}, whereby exchanging the labels of the particles results in a state that is indistinguishable from the original state. Explicitly, let $\cH$ be the Hilbert space, with orthonormal basis $\left\{\psi_j\right\}$, describing the state of each particle. Then, the pair can be in a state $\Psi=\sum_{j,k}c_{j,k}\psi_j\otimes \psi_k\in\cH\otimes\cH$ such that
\begin{equation}\label{eq:swap}
P\Psi=\sum_{j,k}c_{k,j}\psi_j\otimes \psi_k=e^{i\theta}\sum_{j,k}c_{j,k}\psi_j\otimes \psi_k=e^{i\theta}\Psi
\end{equation}
where $P$ is the swap operator defined by $P\psi\otimes\phi=\phi\otimes\psi$ for all $\psi,\phi\in\cH$. Since $\Psi$ and $P\Psi$ only differ by a global phase, the statistics given by measuring any hypothetical observable from quantum theory on either state $\Psi$ or $P\Psi$ would be identical, and thus, the two states cannot be distinguished. 

This invariance of the state\footnote{We say invariance here in the sense that the state is less redundantly described by a ray of projective Hilbert space, meaning the vectors $\Psi$ and $e^{i\theta}\Psi$ are two representatives of the same state.} under exchange of the labels will form the notion of indistinguishability we will study. Namely, a collection of identical particles are \emph{indistinguishable} if there is no observable difference between any relabelling of the particles. Note that under this notion of indistinguishability, models of physics considered to be classical also allow for pairs of systems to be indistinguishable. For example, consider a pair of identical point particles $A$ and $B$ with positions $x$ and $y$ and momenta $p$ and $q$ in $\R^3$, respectively. If we describe the state of the pair by the (unordered) set $\{(x,p),(y,q)\}$ then relabelling the particles results in the same state. 

There is, however, a further notion of indistinguishability not satisfied by this classical example. In this notion,  the two point particles can be distinguished by tracking their trajectories after some time at which the particles are first labelled. Theoretically, one can always measure their position with sufficient frequency to distinguish the particles, since they cannot occupy the same point in space and assuming they cannot travel faster than light. In quantum theory, however, there are no such trajectories. Indeed, the extrinsic properties which can be tracked in the classical case do not take fixed values that can be simultaneously observed without disturbing the state of the system at each point in time (see, e.g. Ref.~\cite{omar2005indistinguishable} for an introduction to this notion of indistinguishability). Within quantum theory one can demonstrate the trajectory indistinguishability follows from indistinguishability based on exchange symmetry~\cite{hugimb}. However, the point particle example demonstrates that this implication does not hold more broadly. In this work, the notion of indistinguishability based on exchange symmetry will remain our focus. 

Indistinguishable particles were described early in the development of quantum theory. For example, they provided Pauli's solution~\cite{Pauli1925} to Bohr's question~\cite{Falicov1977} of why all the electrons of an atom do not occupy the lowest energy level, which became the Pauli exclusion principle~\cite{Pauli1925a}, which was in turn generalised in the symmetrisation postulate~\cite{Messiah}. 

Although any phase in Eq.~\eqref{eq:swap} results in theoretically indistinguishable particles, empirically we have found that particles from the standard model exist only in symmetric or antisymmetric states, i.e., this phase only takes the values $\pm1$. Particles that are restricted to residing in symmetric states, such as photons and gluons, are known as bosons, while their antisymmetric counterparts, such as electrons and protons, are known as fermions. 

As put by Schr\"odinger~\cite{schrodinger}, the discovery of indistinguishable particles prompted physicists to
\begin{quote} \textit{dismiss the idea that a particle is an individual entity which retains its ‘sameness’ forever. Quite the contrary, we are now obliged to assert that the ultimate constituents of matter have no ‘sameness’ at all.} 
\end{quote}
There is now a great literature on the interpretation of the indistinguishability presented by quantum theory, discussing topics such as, whether indistinguishable particles can be considered as \emph{individuals} and the relation to Leibniz's principle of identity of indiscernibles  (see, e.g.,  Refs.~\cite{omar2005indistinguishable,french,Catren2023} and references therein). The information-theoretic potential of the indistinguishability of quantum particles is also now being established~\cite{app1,app2}.

This work will ask what we can learn about indistinguishability in our universe by considering how indistinguishable particles would present themselves in other hypothetical theories of physics, and whether there is anything special about quantum indistinguishability viewed in this broader landscape. We will study this question in the framework of \emph{general probabilistic theories} (GPTs)~\cite{Barrett,plavala2023general}, a class of theories derived from minimal assumptions about what procedures one can perform in a laboratory. 

GPTs have proven to be a versatile tool for studying both the fundamental physical and the information-theoretic characteristics of quantum and classical theories. The framework has been used, for example, to provide alternative and less abstract postulates for quantum theory~\cite{hardy2001quantum,Masanes}, to connect state space structure to degree of Bell non-locality~\cite{Janotta}, to show equivalence between generalised noncontextuality and other notions of classicality~\cite{Schmid}, and to demonstrate connections between non-classical features of quantum theory in a theory-independent way, including purification, reversible transformation, teleportation, and no-cloning~\cite{Chiribella}.

We now extend this line of study to include indistinguishable particles. This work aims to lay the ground work for studying the possibility of indistinguishable particles in GPTs. We propose approaches define how different types of indistinguishable particles emerge from a GPT, then examine these types in various GPTs. The goal of this analysis is establishing their similarities to and differences from the quantum case, to examine in what ways quantum indistinguishability is special. We will define indistinguishable particles using two different formalisms for the GPT framework. The first, more traditional approach describes GPTs by assigning to each system a state and effect space that are subsets of a real vector space (see, e.g.~\cite{Barrett}), while the second presents these concepts via a diagrammatic formulation of category theory, resulting in more visually-intuitive proofs (see, e.g.~\cite{Selby}). 

Earlier work on particle types in the study of GPTs \cite{dahlsten2013particle}, focused on how to measure the particle type via a ``swap experiment''. In this case particle types are labelled by elements of the phase group for a system. This approach, however, is limited in scope to the class of GPTs in which such a swap experiment is well defined. The broadest class known to admit of such an experiment is defined in Ref.~\cite{lee2016generalised}, however, following Ref.~\cite{barnum2019strongly} it seems likely that this is in fact a rather narrow class consisting only of finite dimensional Euclidean Jordan Algebras. In contrast, the definitions that we put forwards here are applicable to arbitrary GPTs. \blk

 %\mk{[We need to cite https://arxiv.org/abs/1307.2529 and explain the difference between what they did and what we are doing. This manuscript was never published, do you know why?]} \jnote{From what I remember there are a bunch of implicit assumptions in that paper which mean that it really isn't as general as they claim. Essentially, they rely on the existence of a phase kickback mechanism, which doesn't exist in many GPTs. In https://arxiv.org/abs/1510.04699 we show that such a phase kickback mechanism exists under certain axioms, but in retrospect we can now say that this collection of axioms brings us extremely close to quantum theory, so it isn't really a big enough class of GPTs to be interesting.}

\section{Indistinguishability in the traditional GPT framework}\label{sec:trad}

General probabilistic theories (GPTs) describe (potentially hypothetical) physical systems from an operational perspective. The framework focuses on describing state preparation, measurement and the probabilities associated to measurement outcomes. This approach is firmly rooted in observable phenomena, abstracting away assumptions that cannot be directly verified. It consists of three core elements: states, effects (describing possible measurement outcomes) and transformations. We will briefly describe the basic components of this GPT framework, for motivation and more details, see e.g. Refs~\cite{hardy2001quantum,Barrett,plavala2023general}.

In this framework, \emph{states} of a system are equivalence classes of preparation procedures that lead to the same probabilities for all possible measurements. States form a convex subset, $\cS$, of a real vector space, $V$. Here convexity reflects possibility of probabilistically performing different preparation procedures. The extremal points of this set can be thought of as counterparts of the pure states from quantum theory. 

For (finite-dimensional) quantum theory the real vector space is the Hermitian operators on $\C^d$ and state space is the set of density operators, i.e. the non-negative operators $\rho\ge0$ with unit trace.

Similarly, \emph{effects} on a system represent equivalence classes of operationally indistinguishable measurement outcomes. Each effect is a linear functional that assigns a probability to each state, reflecting the likelihood of the associated outcome. The space of effects is bounded by the constraints on these probabilities, which must be non-negative and cannot exceed one. It follows that we can describe the set of effects, $\cE$, on a system by a subset of the vectors $\be\in V$ such that $0\le \langle \be,\bo\rangle\le 1$ for all states $\bo\in\cS$. 

Accordingly, in quantum theory effects are the Hermitian operators, $E$, satisfying $0\le\tr(E\rho)\le 1$ for all density operators $\rho$. Quantum theory satisfies the no-restriction hypothesis for effects, meaning that every vector satisfying the described constraints is included in the effect space. For simplicity, in this work we will assume our GPTs satisfy the no-restriction hypothesis.

A GPT also describes how systems compose. For example, consider a GPT $\mathfrak{G}$ that models a particle. Let the state (and effect) space $\mathcal{S}$ ($\mathcal{E}$) of the particle be embedded in a real vector space $V$. The GPT $\mathfrak{G}$ will then prescribe a state (and effect) space $\mathcal{S}^2$ ($\mathcal{E}^2$) for a pair of these particles embedded in $V\otimes V$, under the assumption of local tomography\footnote{Without local tomography the composed state and effect spaces would be embedded in $V\otimes V\oplus W$ for some other real vector space $W$. Allowing for these extra degrees of freedom would not affect our analysis so we omit them.}.

Finally, \emph{transformations} in GPTs describe how states evolve under controlled operations. Transformations are represented by invertible linear maps, $T:V\rightarrow V$, that preserve the state space, i.e. $T\bo\in\cS$ for all $\bo\in\cS$. Again, for simplicity we will assume no-restriction on the set of transformations. We also impose the condition that if we compose a system with any other system in the GPT and act with $T$ tensored with the identity, the resulting vector should be a valid state of the composed system. 

In quantum theory, these conditions combine to identify transformations as invertible CPTP maps. Since we are only considering transformations of a given state space the dimension of the Hilbert space is fixed and these maps correspond to unitary transformations \cite{Nayak2007},
\begin{equation}\label{eq:unitary}
T(\rho)=U\rho U^\dagger
\end{equation}
for some unitary operator $U$ on $\cH$.

In general, the transformations preserving a state space $\mathcal{S}$ and satisfying the composition rule for the GPT $\mathfrak{G}$ form a group $\mathcal{T^\mathfrak{G}_\mathcal{S}}$ (we will drop the index $\mathfrak{G}$ to simplify notation).

This framework is derived from minimal assumptions, and thus GPTs encompass a broad range of theories from those in classical physics to hypothetical theories reaching beyond quantum mechanics. They provide a testbed for comparing fundamental properties of physics. In this section we will use this framework to investigate the notion of indistinguishable particles.

%\bnote{[Make this a bit more extensive (but not too much) adding some blurb on GPTs. You know, to mimic section 3 a bit. ]}\mk{[I've tried to add a blurb, please see if it is blurby enough. BTW 'blurb' is such a fantastic word, how come I've never heard it before?!.]}

\subsection{Symmetries and groups in GPTs}

We will now motivate and define the mathematical tools we will use to identify the indistinguishable particles that emerge within a theory. Consider two particles each with state space $\cS\subset V$, and assume that these particles are indistinguishable. The indistinguishablility assumption means that if we relabel (swap) the systems, the state of the pair should be preserved. Thus, we consider the subset $\mathcal{S}^2_s$ of states in $\mathcal{S}^2$ that are also in the symmetric subspace of $V\otimes V$. That is, if $\left\{\be_1,\ldots,\be_n\right\}$ is a basis for $V$  consider states $\bo$ such that
\begin{equation}\label{eq:bisym}
\bo=\sum_{ij}v_{ij}\be_i\otimes\be_j=\sum_{ij}v_{ij}\be_j\otimes\be_i.
\end{equation}
This set of symmetric states can be also expressed as  
\begin{equation}\label{eq:symmetric_states}
\mathcal{S}^2_s = \{ \bo \in \mathcal{S}^2 ~|~ P \bo = \bo \}\,,
\end{equation}
 where $P$ is the swap operator 
\begin{equation}\label{eq:swap}
P( \be_i\otimes\be_j ) = \be_j\otimes\be_i\,.
\end{equation}
We will generally refer to such states as symmetric, and pure quantum states satisfying Eq.~\eqref{eq:swap} as wavefunction symmetric to disambiguate between the two notions.

Now, if we act on a pair of indistinguishable particles with some transformation $T \in \mathcal{T}_\mathcal{S}$, then the state of the system should remain symmetric  after $T$ is applied (otherwise the particles become distinguishable). This means that the set of operations that can be applied to a system of indistinguishable particles should be restricted to the subset of $\mathcal{T}_\mathcal{S}$ that preserves the symmetric subspace of $V$. 
That is, under the assumption of indistinguishability the transformations will be given by the subgroup $\mathcal{T}^s_\mathcal{S}\leq\mathcal{T}_\mathcal{S}$ given by the transformations that commute with the swap operator on the symmetric subspace, 
\begin{equation}\label{eq:symmetry_preserving}
\mathcal{T}^s_\mathcal{S} = \{ T \in \mathcal{T}_\mathcal{S} ~|~ [T, P]_s = 0 \}.
\end{equation}

\subsection{Particle types}\label{types}

Under these symmetric transformations, $\cT_\cS^s$, it will generally not be possible to transform every symmetric pure state into any other symmetric pure state. This means under our notion of indistinguishability a system will be trapped in some part of the space of symmetric states. These disjoint areas of the state space form the intuition for our notion of particle types. The type of particle will be defined by which subset of the symmetric states it can occupy.

%In this section we discuss one possible way to identify different particle \textit{types} (such as bosons and fermions) in a given GPT. We want this `type' to be an intrinsic property of the particle and therefore invariant under the allowed transformations. 

When defining the state spaces of the different particle types, we want to avoid states deriving from mixtures of states of different particle types. \blk %\jnote{I don't see why we can't flip a coin and decide whether to prepare a fermionic system or a bosonic system!}
For instance, in the case of quantum theory, mixing bosonic and fermionic states can give the maximally mixed state whose density operator satisfies Eq.~\eqref{eq:bisym} but is not a valid state of a boson or fermion. %\jnote{Sure, but operationally we can still do it, right?} 

In order to avoid including such mixtures in the state spaces of the different particle types, initially we limit our consideration finding the pure states of each particle type. We have two options: 
\begin{enumerate}
\item[I.] the extremal symmetric states (extremal points of the set of symmetric states), or
\item[II.] the symmetric extremal states (the subset of extremal points of the full state space that are symmetric). 
\end{enumerate}
In quantum theory these two options coincide.

\begin{lemma}
The pure states $\psi\in\cH\otimes\cH$ satisfying $P\psi=e^{i\theta}\psi$ (the symmetric extremal states) are the extremal points of the set of density operators $\rho$ on $\cH\otimes\cH$ that satisfy $P\rho P= \rho$ (extremal symmetric states).
\end{lemma}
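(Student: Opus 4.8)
The plan is to exploit that the swap operator $P$ is both unitary and self-adjoint with $P^2=\I$, so that $\cH\otimes\cH$ splits orthogonally into the $+1$ and $-1$ eigenspaces $\cH_+$ (symmetric) and $\cH_-$ (antisymmetric), and $P$ has no eigenvalues other than $\pm1$. First I would record the easy half of the identification: a pure state $\ketbraq{\psi}$ satisfies $P\ketbraq{\psi}P=\ketbraq{\psi}$ exactly when $\ketbraq{P\psi}=\ketbraq{\psi}$, i.e. when $P\psi=e^{i\theta}\psi$, which by the spectral constraint forces $\psi\in\cH_+$ or $\psi\in\cH_-$. This is precisely the definition of the symmetric extremal states, so the content of the lemma reduces to showing that these pure eigenvectors of $P$ are exactly the extremal points of the convex set $C=\{\rho\ge0 : \tr\rho=1,\ P\rho P=\rho\}$ of symmetric density operators.

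For the inclusion that every such pure eigenstate is extremal in $C$, I would simply note that $\ketbraq{\psi}$ lies in $C$ and is already an extremal point of the \emph{full} set of density operators on $\cH\otimes\cH$; since $C$ is a convex subset of that set, extremality is inherited, and these states are extremal in $C$ as well. The converse is the step I expect to be the crux. Here I would use that $P\rho P=\rho$ is equivalent to $[\rho,P]=0$ (because $P=P^{-1}$), so every $\rho\in C$ preserves both eigenspaces and is block diagonal, $\rho=\rho_+\oplus\rho_-$ with $\rho_\pm\ge0$ supported on $\cH_\pm$. Given an extremal $\rho$, I argue in two stages. If both blocks are nonzero, setting $p=\tr\rho_+\in(0,1)$ exhibits $\rho$ as the proper convex combination $p(\rho_+/p\oplus 0)+(1-p)(0\oplus\rho_-/(1-p))$ of two distinct elements of $C$, contradicting extremality; hence $\rho$ is supported in a single eigenspace. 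If that surviving block had rank greater than one, its spectral decomposition would write $\rho$ as a proper convex combination of rank-one projectors onto eigenvectors of $P$, each again lying in $C$, once more contradicting extremality. Therefore $\rho=\ketbraq{\psi}$ with $\psi$ an eigenvector of $P$, i.e. $P\psi=e^{i\theta}\psi$.

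Combining the two inclusions identifies the extremal points of $C$ with the pure states satisfying $P\psi=e^{i\theta}\psi$, which is exactly the claimed coincidence of symmetric extremal states and extremal symmetric states. The only genuine subtlety is checking that each piece of every convex decomposition used remains \emph{symmetric}; this is guaranteed precisely because symmetry is equivalent to commuting with $P$, a property manifestly preserved by both the block-diagonal splitting and the spectral decomposition within a single eigenspace.
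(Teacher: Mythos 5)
Your proof is correct and follows essentially the same route as the paper's: both rest on the eigenspace decomposition $\cH\otimes\cH=\cH_s\oplus\cH_a$ of the swap operator, the block-diagonality $\rho = S\rho S + A\rho A$ of any symmetric density operator, and spectral decomposition within each block. The differences are only presentational --- you phrase the hard inclusion as a contradiction argument rather than a direct convex decomposition, and you explicitly record the easy converse (that pure eigenstates of $P$ inherit extremality from the full state space), which the paper leaves implicit.
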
 
\begin{proof}
The Hilbert space $\cH\otimes\cH$ can be decomposed into the direct sum of its (wavefunction) symmetric and anti-symmetric subspaces, $\cH\otimes\cH=\cH_s\oplus\cH_a$. The (operationally) symmetric extremal states of the density operators on $\cH\otimes\cH$ are the pure states of these two subspaces, i.e. the states $\psi\in\cH\otimes\cH$ satisfying either $P\psi=\pm\psi$.

Given a density operator $\rho$ on $\cH\otimes\cH$ that satisfies $P\rho P= \rho$ we will show it can be written as a convex combination of the pure states described above. Let $S$ and $A$ be the orthogonal projections onto the symmetric and antisymmetric subspaces $\cH_s$ and $\cH_a$, respectively. Note that we have $S=(\I+P)/2$ and $A=(\I-P)/2$. It follows that
\begin{equation}
\begin{aligned}\label{eq:SplusA}
\rho &= (S+A)\rho(S+A)\\
&=S\rho S + S\rho A + A\rho S + A\rho A\\
&=S\rho S + A\rho A,
\end{aligned}
\end{equation} 
where the final line follows since 
\begin{equation}
S\rho A = \frac14(\rho + P\rho - \rho P - P\rho P) = \frac14(P\rho - \rho P)
\end{equation}
by the assumption that $P\rho P = \rho$, and similarly
\begin{equation}
A\rho S  = \frac14(-P\rho + \rho P). 
\end{equation}
By Eq.~\eqref{eq:SplusA}, $\rho$ can be written as the convex combination $$\rho= \tr(\rho S)\rho_s + \tr(\rho A)\rho_a$$ of the density operators $\rho_s=S\rho S/\tr(\rho S)$ and $\rho_a=A\rho A/\tr(\rho A)$. In turn, the operator $\rho_s$ ($\rho_a$) is a density operator on $\cH_s$ ($\cH_a$) and can therefore be written as a convex combination of the pure states in this Hilbert space, i.e. the symmetric (antisymmetric pure states) on $\cH\otimes\cH$.
\end{proof}

However, in other GPTs these two sets of states may not be equivalent. For example, consider the classical two-bit GPT. The full state space has four orthogonal extremal states $[00]$, $[11]$, $[01]$ and $[10]$, the first two of which are symmetric. However, the subset of symmetric states also contains the state $([01]+[10])/2$ which is not in the convex hull of $[00]$ and $[11]$. So the symmetric extremal states $[00]$ and $[11]$ are not all the extremal points of the set of symmetric states. See Sec.~\ref{sec:class}. We leave as a question for future work, in which GPTs apart from quantum theory are the symmetric extremal and extremal symmetric states the same?

In the following $\mathcal{S}^2_{\text{ex},s}$ can denote either of these two choices for the symmetric pure states. 
%cthe extremal points   which in the case of two particles are given by  $\mathcal{S}^2_\text{ex} = \Ext(\mathcal{S}^2_s)$,  i.e., the smallest set of states whose convex hull recovers $\mathcal{S}^2_s$.\jnote{Here is where we have an issue with classical theory. If we consider symmetric extremal states we get something different to if we consider extremal symmetric states. I.e., if we symmetrise  the mixed state space and then consider extremal points we get extra extremal points than we do if we first consider the extremal points and then symmetrise them. I.e., in the former case for a pair of bits we get $[00]$, $[11]$, and $1/2([01]+[10]$ whilst in the latter case we just get $[00]$ and $[11$].}  In the case of (finite dimensional) quantum theory, these states will correspond to the (projections onto) symmetric and antisymmetric unit vectors in a Hilbert space $\C^d\otimes\C^d$.
First, we consider defining particles types in \emph{transitive} GPTs (such as quantum theory), wherein any pure state can be transformed into any other. In other words the set of orbits 
\begin{equation}
\mathcal{O}_{\mathcal{T}_\mathcal{S}} := \mathcal{S}^2_\text{ex} / \mathcal{T}_\mathcal{S}\,.
\end{equation}
has only one element. Here, the orbit of an extremal state $\bo_\text{ex}$ under the group $\mathcal{T}_\mathcal{S}$ is given by the set
\begin{equation}
[ \bo_\text{ex} ] = \left\{\bo ~|~ \bo = T \bo_\text{ex}\text{ for some }T\in\mathcal{T}_\mathcal{S}\right\}.
\end{equation}

In these transitive GPTs, to define different types of indistinguishable particles we look at which extremal symmetric states $\bo_\text{ex}$ can be transformed into one another by the transformations $\mathcal{T}^s_\mathcal{S}$. The orbit of an extremal state $\bo_\text{ex}$ under the group $\mathcal{T}^s_\mathcal{S}$ is given by the set
\begin{equation}
[ \bo_\text{ex} ] = \left\{\bo ~|~ \bo = T \bo_\text{ex}\text{ for some }T\in\mathcal{T}^s_\mathcal{S}\right\}.
\end{equation}

We can then define the different types of particles to be given by the orbits 
\begin{align}
\mathcal{O}_{\mathcal{T}^s_\mathcal{S}} := \mathcal{S}^2_{\text{ex},s} / \mathcal{T}^s_\mathcal{S}\,.
\end{align} 

That is,
\begin{quote}
we interpret \textit{$\mathcal{O}_{\mathcal{T}^s_\mathcal{S}}$ as  the set of different types of indistinguishable particles} predicted by the GPT $\mathfrak{G}$, since only the states in each orbit can be transformed into each other under the indistinguishablility condition.
\end{quote}

We denote the different types of particles by $\pi$, and therefore $\mathcal{S}^2_{\text{ex},s} / \mathcal{T}^s_\mathcal{S} = \{ [ \bo_\pi ] \}_\pi$. We recover the full state space of a given type of particle by taking the convex hull of the extremal points,
\begin{equation}\label{eq:particle_type}
\mathcal{S}^2_\pi = \Conv [\bo_\pi] .
\end{equation}

In \textit{non-transitive GPTs}, such as Boxworld \cite{Barrett}, the set $\mathcal{O}_{\mathcal{T}_\mathcal{S}}$ may have already more than one orbit, and, hence, so will $\mathcal{O}_{\mathcal{T}^s_\mathcal{S}}$. The orbits in $\mathcal{O}_{\mathcal{T}^s_\mathcal{S}}$ can then not straightforwardly be understood as different particles, since the structure may be inherent to the GPT itself rather than emerging from symmetry considerations. In these cases we examine how each of the existing orbits of $\mathcal{O}_{\mathcal{T}_\mathcal{S}}$ split into symmetric orbits\footnote{We see that orbits $\mathcal{O}_{\mathcal{T}^s_\mathcal{S}}$ are splittings of the original orbits $\mathcal{O}_{\mathcal{T}_\mathcal{S}}$ rather that some sort of mixing, by thinking of these orbits as equivalence classes. Then,  \blk when constructing the equivalence classes in $\mathcal{O}_{\mathcal{T}^s_\mathcal{S}}$ one has access to fewer transformations than when constructing the equivalence classes of $\mathcal{O}_{\mathcal{T}_\mathcal{S}}$, and hence naturally some of the orbits of $\mathcal{O}_{\mathcal{T}^s_\mathcal{S}}$ will become equivalent in $\mathcal{O}_{\mathcal{T}_\mathcal{S}}$.} to create the set $\mathcal{O}_{\mathcal{T}^s_\mathcal{S}}$. We therefore can conceivably end up with a different set of particles for each of the original orbits in $\mathcal{O}_{\mathcal{T}_\mathcal{S}}$.\blk

 To formalise this, we define a (potentially non-injective) function $F:\mathcal{O}_{\mathcal{T}^s_\mathcal{S}} \to \mathcal{O}_{\mathcal{T}_\mathcal{S}}$ which maps a symmetric orbit into the original orbit that contains it, that is, such that $x\subseteq F(x)$ for all symmetric orbits $x\in \mathcal{O}_{\mathcal{T}^s_\mathcal{S}}$. 
 In this case, 
\begin{quote}
we interpret the preimage $F^{-1}(y)$ as the set of different types of indistinguishable particles predicted by the GPT $\mathfrak{G}$ for a given orbit $y$. In the transitive case we have just a single orbit $y$ hence we recover the previous definition.
\end{quote}
\blk

\subsection{Examples}\label{se:examples}

\subsubsection{Quantum theory}

In quantum theory, we recover the boson and fermion particle types. Extremal states are rank-1 projections onto the subspaces of the Hilbert space $\mathcal{H}=\C^d\otimes\C^d$. Due to the Schur--Weyl duality, $\mathcal{H}$ can be decomposed as $\mathcal{H}=\mathcal{H}_s\oplus\mathcal{H}_a$ into the direct sum of its wavefunction symmetric and antisymmetric subspaces (corresponding to the trivial and the sign representation of the symmetric group of order 2). The unitary maps corresponding to the transformations in the subgroup $\mathcal{T}^s_{Q_n}$ (as in Eq. \eqref{eq:unitary}) will all be of the form $U_s\oplus U_a$, where $U_s$ is a unitary on $\mathcal{H}_s$ and $U_a$ a unitary on $\mathcal{H}_a$ (this follows from $U$ commuting with the swap operator $P$). Hence there will be two orbits of $\mathcal{S}^2_{\text{ex},s}$ consisting of the wavefunction symmetric and antisymmetric states, i.e., two types of particles, bosons and fermions:
\begin{equation}\label{eq:boson-fermion}
\mathcal{S}^2_{\text{ex},s} / \mathcal{T}^s_{Q_n} = \Big\{ \big[ \ketbraq{ \psi_s } \big], \big[ \ketbraq{ \psi_a } \big] \Big\},
\end{equation}
where $\ket{ \psi_{s(a)}}$ is any wavefunction symmetric (antisymmetric) unit vector on $\C^d\otimes\C^d$.

For two qubits, the three symmetric Bell states form a basis for the symmetric subspace of $\C^2\otimes\C^2$ and the singlet is the only antisymmetric pure state. So if you write a unitary that preserves the symmetric subspace of the Hermtian operators on $\C^2\otimes\C^2$ in the Bell basis it will have the form $U\oplus\mathrm{e}^{\mathrm{i}\varphi}$ for a $3\times3$ unitary $U$.

\subsubsection{Classical GPT}\label{sec:class}

Consider the GPT in which the state spaces for each dimension $d\in\mathbb{N}$ are given by simplices with orthogonal extremal points $[0],[1],\ldots,[d-1]$. Transformations $T$ are simply relabellings of the $d$ pure states. A pair of such systems of the same dimension has a simplicial state space with orthogonal extremal point $[j]\otimes[k]$ for $0\le j,k \le d-1$. The symmetric states are those in the convex hull of the states $1/2([j]\otimes[k] +[k]\otimes[j])$ for all $0\le j,k \le d-1$. The transformations that preserve this space are those of the form $T\otimes T$.  

For option I in Sec.~\ref{types} we take the pure states of the particle types to be all the states $1/2([j]\otimes[k] +[k]\otimes[j])$ for $0\le j,k\le d-1$. We get two orbits under the transformations $T\otimes T$, one comprising the extremal states with $j=k$, i.e. the states $[j]\otimes[j]$ and the other comprising the states $1/2([j]\otimes[k] +[k]\otimes[j])$ for $j\neq k$. Hence, this approach gives two types if indistinguishable particle for each dimension.

On the other hand, for option II in Sec.~\ref{types} we take the pure states of the particle types to be only the states $[j]\otimes[j]$. In this case we only get one orbit and hence particle type, since $[j]\otimes[j]$ can be taken to $[k]\otimes[k]$ $0\le j,k\le d-1$ by a transformation $T\otimes T$.

\subsubsection{Boxworld}

In the GPT called Boxworld \cite{Barrett}, \blk the reversible transformations were identified in Ref.~\cite{Gross2010} to be only those given by local relabellings of the measurements and outcomes and relabellings of the systems. Since such transformations are non-entangling, it follows that Boxworld is non-transitive. In the simplest case of two particles, each associated with a square state space, one finds two orbits: one consisting of only product states, and one of only entangled. If one then searches for the orbits under the subgroup of transformations that commute with the swap operator, one does not find new orbits. The same two orbits as before (product states and entangled states) are recovered (each fewer states, however, since we only take the symmetric states). 
%This suggests that the study of indistinguishability of particles through the technique presented in this section might not be enough to capture how different types of particles arise in Boxworld due to their symmetry. Another option, though, 
In this approach, Boxworld does not admit of different indistinguishable particle types, beyond the particles types already present without the assumption of indistinguihability.

\subsubsection{Spekkens' toy model} 
In the Spekkens' toy model~\cite{Spekkens} a system resides in one of four \emph{ontic} states denoted `1', `2', `3' and `4'. Characteristically, no measurement can fully determine the ontic state of the system. The model satisfies the knowledge balance principle: the maximal attainable knowledge cannot exceed the amount of lacking information. This principle gives rise to a state space labelled by the pairs of possible ontic states underlying them linked with the $`\vee$' symbol. The states corresponding to the maximal knowledge of the elementary system (\emph{pure} states) are: $1\vee2$, $1\vee3$, $1\vee4$, $2\vee3$, $2\vee4$ and $3\vee4$. This theory can be convexified in a GPT as considered in Ref.~\cite{Janottaa}, resulting in a state and effect space both given by an octahedron. The allowed transformations are defined as permutations of the ontic states which respect the knowledge balance principle. Notably, this GPT system does not satisfy the no-restriction hypothesis, but nonetheless we can apply our methodology.

The ontic states of two elementary systems are defined by the ontic states of the subsystems. This leads to 16 ontic states denoted $1\cdot1,\,1\cdot2,  \ldots, 4\cdot4$.  All the pure epistemic states can be obtained by local permutations from the separable state $(1\cdot1)\vee(1\cdot2)\vee(2\cdot1)\vee(2\cdot2)$ and entangled state $(1\cdot1)\vee(2\cdot2)\vee(3\cdot3)\vee(4\cdot4)$, which can be expressed in the following diagrams wherein a square is shaded if it corresponds to a possible ontic state
\begin{equation}
 \begin{matrix}   \includegraphics[scale=0.5]{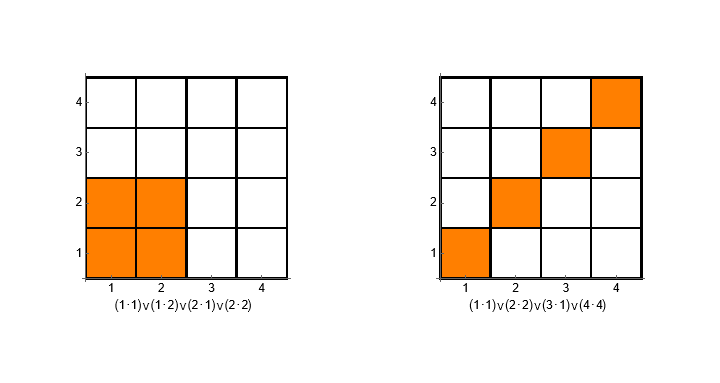}.
\end{matrix}
\end{equation}

Of these states the following 16 are invariant under exchange of the systems
\begin{equation}
\label{spekkensOrbits}
\begin{matrix}
    \includegraphics[scale=0.7]{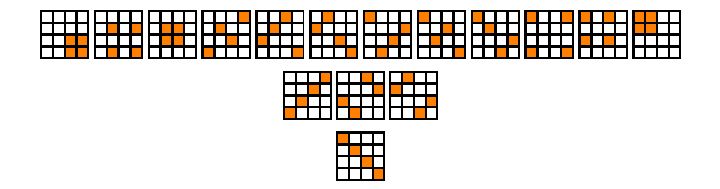}
\end{matrix}
\end{equation}
and they form our symmetric extremal states.

The set of symmetric transformations $\mathcal{T}^s_\mathcal{S}$ consist of applying the same local permutation of ontic states to both subsystems and an entangling gate defined graphically as
\begin{equation}
\label{cnot}
    \begin{matrix}\includegraphics[scale=0.3]{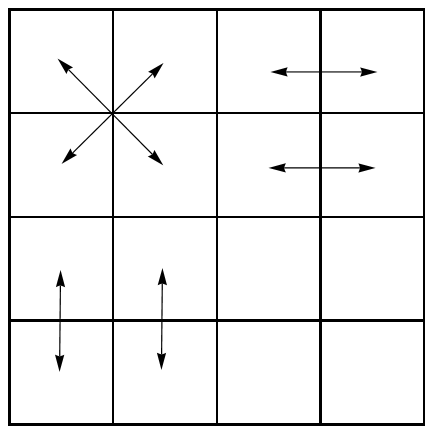}
    \end{matrix}.
\end{equation}
Note that the transformation (\ref{cnot}) can be obtained from the CNOT transformation defined in~\cite{Spekkens} by a swap of columns.
These transformations split our symmetric extremal states into three orbits, characterized by the number of symmetric ontic states forming the epistemic states of each class: 0, 2 or 4 (see the rows of Eq.~(\ref{spekkensOrbits})). These three orbits define the types of indistinguishable particles, under option II from Sec.~\ref{types}

\section{Indistinguishability in the diagrammatic approach}

Now, we will consider the question of indistinguishable particles in GPTs using the formalism of process theories \cite{coecke2018picturing}.
In the following subsections we will get into the details of what process theories are, and how indistinguishable particles emerge in the formalism. Here let us summarise what will unravel.

 A key result from Ref.~\cite{Coecke} that will help us frame the property of indistinguishability, is the relation of two categorical constructions: the \emph{biproduct completion} and the \emph{Karoubi envelope}. One finds that symmetrisation at the operational level (as expressed within the Karoubi envelope) leads to the decomposition of the symmetrised multipartite state space into a direct sum (a.k.a. biproduct) of state spaces. 
If it is the case (such as in quantum theory) that prior to symmetrisation there is no nontrivial direct sum decomposition, then we will think of each of these direct summand state spaces as the multipartite particle types. The case where there is already a nontrivial direct sum decomposition (such as in classical theory) then the situation is somewhat more complicated, but essentially particle types correspond to any new decompositions which emerge through the symmetrisation process. \blk
 
 In the particular case of quantum theory, for pairs of particles these sectors can be shown to correspond to bosons and fermions. For quantum systems of more particles, other sectors emerge, which we interpret as paraparticles. How many sectors and what sorts of particles we get in more general settings, i.e., beyond quantum and classical theory, is an open question.

\subsection{Process theories}

 A process theory is defined by a collection of systems, processes that act on these systems, and rules on how these systems  may compose. Different ways of specifying these ingredients give rise to a different process theory. For example, a way to specify quantum theory as a process theory is by taking a system to be a `Hilbert Space', the processes to be `completely positive and trace preserving maps', and composition rule to be the tensor product. A process theory comes equipped with a diagrammatic calculus, where one can represent these systems by `wires', and the processes acting on them by `boxes', like in the example of Eq.~\eqref{eq:diagex}. 

Process theories  can be thought of  as being equivalent to symmetric monoidal categories (SMCs). Essentially, the elements of an SMC correspond to the wires and boxes in the diagrams below, and the operations of the SMC correspond to how the wires and boxes are turned into diagrams. The axioms  for the SMC ensure that the diagrams make sense: that is, two different ways of writing a diagram in terms of the basic SMC operations are necessarily equal by the axioms of the SMC.

In diagrams, such as:
\beq\label{eq:diagex}
\InputIfFileExists{Diagrams/diagram.tikz}{}{\input{./figures/Diagrams/diagram.tikz}},
\eeq
processes (i.e., boxes) with no inputs (such as $s$) are states, and those with no outputs (such as $e$) are effects. Note also that the swapping of the systems (i.e., wires) $A$ and $C$  in this diagram is an allowed operations, which makes this a \emph{symmetric} monoidal category rather than just a monoidal category. This swap operation satisfies various conditions but most relevant for us is that it is self-inverse:
\beq
\InputIfFileExists{Diagrams/swap.tikz}{}{\input{./figures/Diagrams/swap.tikz}}\quad=\quad%
\InputIfFileExists{Diagrams/swap2.tikz}{}{\input{./figures/Diagrams/swap2.tikz}}\,.
\eeq
Intuitively this means that wires do not get into knots and instead can simply pass through each other. This gives us with a natural way to represent the permutation group on a composite system. For example, in the above diagram we have the permutation group of two elements given by the identity and the swap, such that composing the swap with itself is the identity.

 Another important ingredient in the formalisation of the process theories we are interested in is sums of processes\footnote{ Not every process theory comes with sums, but all of the ones of interest for us such as quantum theory, classical theory, and general probabilistic theories, do.\blk}. The key feature of summing is that it distributes over diagrams, that is:
\beq
\InputIfFileExists{Diagrams/diagSum1.tikz}{}{\input{./figures/Diagrams/diagSum1.tikz}} \quad = \quad %
\InputIfFileExists{Diagrams/diagSum2.tikz}{}{\input{./figures/Diagrams/diagSum2.tikz}}. 
\eeq
Closed diagrams are those with neither inputs nor outputs, and are usually known as scalars.  For the theories of interest to us, \blk these are identified with probabilities.

We can now define mixtures of processes:
\beq
\InputIfFileExists{Diagrams/mixture.tikz}{}{\input{./figures/Diagrams/mixture.tikz}}.
\eeq
We will often drop the `box' around scalars, and simply write in the diagram the probability $p_i$.

Having defined mixtures we can now define the following operational symmetrisation process  for two systems:
 \beq
\InputIfFileExists{Diagrams/symmeterise1.tikz}{}{\input{./figures/Diagrams/symmeterise1.tikz}}\quad := \quad \frac{1}{2} \ \ %
\InputIfFileExists{Diagrams/identAA.tikz}{}{\input{./figures/Diagrams/identAA.tikz}}\ \  +\  \frac{1}{2} \ \ %
\InputIfFileExists{Diagrams/swapAA.tikz}{}{\input{./figures/Diagrams/swapAA.tikz}} .
 \eeq
This diagram represents a process wherein with equal probability we either leave the particles alone or we swap them over. We call this operational symmetrisation rather than just symmetrisation since, in quantum theory, this process is symmetrisation at the level of the density matrix rather than at the level of the wavefunction. Operational symmetrisation also generalises to arbitrary numbers of particles, by taking equal mixtures of all the possible exchanges of the particles.  Each possible way to exchange particles in a multipartite system corresponds to a permutation $\Pi$ of the systems, can always be just drawn diagrammatically as a wiring of the input wires to the output wires. 

There are two key features of these (multipartite) symmetrisation processes. The first is that they are invariant under permutations, $\Pi$:
\beq\label{eq:permInvariance}
\InputIfFileExists{Diagrams/multiSymPerm.tikz}{}{\input{./figures/Diagrams/multiSymPerm.tikz}}\quad = \quad %
\InputIfFileExists{Diagrams/multiSym.tikz}{}{\input{./figures/Diagrams/multiSym.tikz}}\quad = \quad%
\InputIfFileExists{Diagrams/multiSymPerm2.tikz}{}{\input{./figures/Diagrams/multiSymPerm2.tikz}}.
\eeq
The second is that they are idempotent, i.e.:
\beq
\InputIfFileExists{Diagrams/multiSymIdem.tikz}{}{\input{./figures/Diagrams/multiSymIdem.tikz}}\quad = \quad %
\InputIfFileExists{Diagrams/multiSym.tikz}{}{\input{./figures/Diagrams/multiSym.tikz}}.
\eeq

\subsection{Splitting idempotents}

 A concept that will be crucial for us when defining indistinguishable particles is  the idea of `splitting an idempotent'. 

\begin{definition}
An idempotent $P\!:\!A\!\to\!A$ splits through a system $B$ iff there exist processes $\iota:A\to B$ and $\kappa:B\to A$ such that:
 \beq
\InputIfFileExists{Diagrams/idemP.tikz}{}{\input{./figures/Diagrams/idemP.tikz}}\quad = \quad %
\InputIfFileExists{Diagrams/idemPSplit.tikz}{}{\input{./figures/Diagrams/idemPSplit.tikz}}\qquad \text{and} \qquad %
\InputIfFileExists{Diagrams/idemPSplit2.tikz}{}{\input{./figures/Diagrams/idemPSplit2.tikz}}\quad = \quad %
\begin{tikzpicture}
	\begin{pgfonlayer}{nodelayer}
		\node [style=none] (33) at (0, -1.5) {};
		\node [style=none] (40) at (0, 1.5) {};
		\node [style=right label] (41) at (0, -1) {$B$};
	\end{pgfonlayer}
	\begin{pgfonlayer}{edgelayer}
		\draw [qWire, in=90, out=-90] (40.center) to (33.center);
	\end{pgfonlayer}
\end{tikzpicture}
}.
 \eeq
\end{definition}
 
In the case of the symmetrisation process, the process it splits through can be thought of as the operationally symmetric subspace. For a pair of quantum systems, \blk the operationally symmetric subspace defined in this way for quantum theory will recover the direct sum of the wavefunction-symmetric and the wavefunction-antisymmetric spaces, i.e., the fermionic and  bosonic subspaces. Going beyond pairs of quantum systems we also find other direct summands which correspond to paraparticle subspaces. \blk

Now, in general not all idempotents in a given theory will split. In fact, this is the case in the standard presentation of the quantum GPT: the idempotent splits through the direct sum of fermionic and bosonic subspaces, but this direct sum is not typically considered to be itself a system  within quantum theory (i.e., it cannot simply be labelled by a Hilbert space). However, there is a well known categorical construction known as the Karoubi envelope (or idempotent completion, or Cauchy completion) which  can be applied to any process theory to define a new process theory in which all idempotents do split. This construction, crucial to our work, is presented in the next subsection. 

\subsection{Karoubi Envelope}

Given some process theory $\mathcal{P}$ we can define a new process theory $K[\mathcal{P}]$ in which all idempotents split. 
The systems in this new process theory  $K[\mathcal{P}]$ are labelled by the idempotents $P\!:\!A\!\to\!A$  where $A$ is an arbitrary system in $\mathcal{P}$ and $P\!:\!A\!\to\!A$ is an arbitrary idempotent process in $\mathcal{P}$.  The composition of two systems $P\!:\!A\!\to\!A$ and $Q\!:\!B\!\to\!B$  in $K[\mathcal{P}]$ is defined as
\begin{align}
(P\!:\!A\!\to\!A)\otimes (Q\!:\!B\!\to\!B) := P\otimes Q\!:\!A\otimes B\!\to\!A\otimes B
\end{align}
which is legitimate because (amongst other things) the process $P\otimes Q\!:\!A\otimes B\!\to\!A\otimes B$  in $\mathcal{P}$ is also idempotent in $\mathcal{P}$.\blk

Now, the processes in  $K[\mathcal{P}]$ with input system $P\!:\!A\!\to\!A$ and output system $Q\!:\!B\!\to\!B$  are labelled by the processes $T$ in the original process theory with input $A$ and output $B$  which are invariant under  composition with the idempotents, i.e., 
\beq\label{eq:KaroubiProcess}
\InputIfFileExists{Diagrams/karoubiCon1.tikz}{}{\input{./figures/Diagrams/karoubiCon1.tikz}}\quad = \quad %
\InputIfFileExists{Diagrams/karoubiCon2.tikz}{}{\input{./figures/Diagrams/karoubiCon2.tikz}},
\eeq
where this diagram should be read within the process theory $\mathcal{P}$. That is, if we have a process $T$ in $\mathcal{P}$ that does not satisfy Eq.~\eqref{eq:KaroubiProcess} for some choice of idempotents $P$ and $Q$ in $\mathcal{P}$, then $T$ cannot label a process from system $P\!:\!A\!\to\!A$ to system $Q\!:\!B\!\to\!B$ in  $K[\mathcal{P}]$. That same $T$, however, might label a process in $K[\mathcal{P}]$ but associated to different input/output systems.

Note that idempotence of the processes $Q$ and $P$  in  the process theory $\mathcal{P}$ means that we can define a mapping from the `processes from system $A$ to system $B$ in $\mathcal{P}$' into the `processes from system $P\!:\!A\!\to\!A$ to system $Q\!:\!B\!\to\!B$ in $K[\mathcal{P}]$' via:
\beq\label{eq:KaroubiMapExt}
\InputIfFileExists{Diagrams/karoubiMap1.tikz}{}{\input{./figures/Diagrams/karoubiMap1.tikz}}\quad \mapsto \quad %
\InputIfFileExists{Diagrams/karoubiMap2.tikz}{}{\input{./figures/Diagrams/karoubiMap2.tikz}}.
\eeq
That is, the process in $\mathcal{P}$ labelled by $f$ maps into the process in $K[\mathcal{P}]$ from type $P\!:\!A\!\to\!A$ to type $Q\!:\!B\!\to\!B$ labelled by $Q\circ f\circ P$.  Indeed, one can see that the RHS of Eq.~\eqref{eq:KaroubiMapExt} satisfies Eq.~\eqref{eq:KaroubiProcess}, and so  it labels a process  in $K[\mathcal{P}]$. 

 An important  example of a system in $K[\mathcal{P}]$ is one labelled by an identity transformations in $\mathcal{P}$, since the identity $\mathds{1}_A\!:\!A\!\to\!A$ is an idempotent operation   in $\mathcal{P}$ and hence labels a system in $K[\mathcal{P}]$. Condition \eqref{eq:KaroubiProcess} is then trivially satisfied, so the labels for  the processes from system $\mathds{1}_A\!:\!A\!\to\!A$ to system $\mathds{1}_B\!:\!B\!\to\!B$ in $K[\mathcal{P}]$ include all of the processes from $A$ to $B$ in $\mathcal{P}$. We therefore see that $\mathcal{P}$ is a full subtheory of $K[\mathcal{P}]$. It is therefore convenient to label the systems $\mathds{1}_A\!:\!A\!\to\!A$ simply by $A$ and to view $K[\mathcal{P}]$ as an extension of $\mathcal{P}$. Indeed, we will take this view below. 

Using this shorthand notation we can now see in what sense the Karoubi envelope allows all idempotents to split.
Notice, first, that any idempotent $P\!:\!A\!\to\!A$ in $\mathcal{P}$ corresponds to a process from $\mathds{1}_A:A\to A$ to $\mathds{1}_A:A\to A$ in $K[\mathcal{P}]$ labelled by $P$. Relabelling the systems $\mathds{1}_A:A\to A$ simply by $A$, then any idempotent $P$ in $K[\mathcal{P}]$ splits through the system $P\!:\!A\!\to\!A$ as follows: 
\beq\label{eq:thespl}
\InputIfFileExists{Diagrams/idemP.tikz}{}{\input{./figures/Diagrams/idemP.tikz}}\quad = \quad %
\InputIfFileExists{Diagrams/idemPSplitKaroubi.tikz}{}{\input{./figures/Diagrams/idemPSplitKaroubi.tikz}}.
\eeq
In fact, $\iota_P$ and $\kappa_P$ are actually the idempotent itself, but now interpreted as maps from $\mathds{1}_A:A \to A$ to $P\!:\!A\!\to\!A$ and from $P\!:\!A\!\to\!A$ to $\mathds{1}_A:A\to A$ respectively. 

 With these tools, then, one can explicitly specify the process that $f$ is mapped into as per Eq.~\eqref{eq:KaroubiMapExt}: 
\beq\label{eq:KaroubiMapInt}
\InputIfFileExists{Diagrams/karoubiMap1.tikz}{}{\input{./figures/Diagrams/karoubiMap1.tikz}}\quad \mapsto \quad %
\InputIfFileExists{Diagrams/karoubiMap3.tikz}{}{\input{./figures/Diagrams/karoubiMap3.tikz}} \,,
\eeq
where notice that the diagram depicted in the right-hand side of Eq.~\eqref{eq:KaroubiMapInt} is a process in $K[\mathcal{P}]$.

\subsection{Splitting symmetrisations}\label{sec:SplitSym}

For this work we are not interested in arbitrary idempotents in $\mathcal{P}$, but specifically in its symmetrisation idempotents. We therefore will not work with $K[\mathcal{P}]$ in full, but, instead with  full subtheory of $K[\mathcal{P}]$  which  we denote by \Ks and consists of: 
\begin{compactitem}
\item Atomic systems: $\mathds{1}_A\!:\!A\!\to\!A$ for any $A\in |\mathcal{P}|$ (which as mentioned above we will simply label by $A$) and $\mathsf{Sym}:\underbrace{A\otimes \cdots \otimes A}_n \to \underbrace{A\otimes \cdots \otimes A}_n$ for any $A\in |\mathcal{P}|$ and $n\in \mathds{N}$ which we will simply label by $\mathsf{Sym}_A^n$.
\item General systems are then any composite of the atomic systems.
\item Processes: the subset of those of $K[\mathcal{P}]$ that these as input and output systems.
\item Composition rule: same as $K[\mathcal{P}]$.
\end{compactitem}
Notice that in \Ks we may have systems composed of $n$ distinguishable particles of the same type $A$, or one system $\mathsf{Sym}_A^n$ composed of $n$ indistinguishable particles of the same individual type $A$. 
I an way, one can either see \Ks as a full subtheory of $K[\mathcal{P}]$, or as an extension of $\mathcal{P}$ to include these new system types that denote collections of indistinguishable particles. 

In \Ks then one has the symmetrisation idempotents which act on collections of systems, and which split through this new system type $\mathsf{Sym}_A^n$. In the diagrams hereon we will not label the processes that correspond to these idempotents, for simpliticy in the presentation. Other processes will be labelled. 

Within \Ks one can write equations like the following: 
\blk
\beq
\InputIfFileExists{Diagrams/symKaroubi.tikz}{}{\input{./figures/Diagrams/symKaroubi.tikz}}\quad =\quad %
\InputIfFileExists{Diagrams/symKaroubi1.tikz}{}{\input{./figures/Diagrams/symKaroubi1.tikz}}.
\eeq
There, we see how the idempotent of $n$ systems (left hand side) splits through system $\mathsf{Sym}_A^n$. Moreover, in the right hand side one can further interpret the embeddings $\iota$ and $\kappa$ from Eq:~\eqref{eq:thespl} as follows. On the one hand, the first process essentially views symmetrisation as a map from $A\otimes\cdots \otimes A$ into the symmetric subspace, where the symmetric subspace is thought of as a new kind of system rather than viewing it as living inside the original state space of $\mathcal{P}$. On the other hand, the second process can be viewed as embedding the symmetric subspace (viewed as an intrinsic system in its own right) into the original system. Since these embedding maps are the splitting of the idempotent, recall that they also satisfy: 
\beq
\InputIfFileExists{Diagrams/symKaroubi2.tikz}{}{\input{./figures/Diagrams/symKaroubi2.tikz}}\quad =\quad %
\begin{tikzpicture}
	\begin{pgfonlayer}{nodelayer}
		\node [style=none] (70) at (0.25, -2) {};
		\node [style=none] (71) at (0.25, 2) {};
		\node [style=right label] (72) at (0.25, -1.25) {$\mathsf{Sym}_A^n$};
	\end{pgfonlayer}
	\begin{pgfonlayer}{edgelayer}
		\draw [qWire, in=-90, out=90] (70.center) to (71.center);
	\end{pgfonlayer}
\end{tikzpicture}
}.
\eeq
 In light of all this, then, we view these embeddings as processes in \Ks which turn distinguishable systems into indistinguishable ones and vice-versa.  For example,  if we view the particles as originally being distinguishable by virtue of them being confined in separate spatial regions, then one may imagine the symmetrisation process as bringing the collection of particles together, or conversely, as pulling them apart again. 

This perspective naturally leads us to identify the  processes which represent, for example, introducing a third originally distinguishable particle to a system of two indistinguishable particles to create an indistinguishable triple. Conveniently, we can build such a process out of the processes we have already:
\beq
\InputIfFileExists{Diagrams/2plus1.tikz}{}{\input{./figures/Diagrams/2plus1.tikz}}\quad:=\quad %
\InputIfFileExists{Diagrams/2plus1to3.tikz}{}{\input{./figures/Diagrams/2plus1to3.tikz}}\qquad \text{and}\qquad %
\InputIfFileExists{Diagrams/2plus1dag.tikz}{}{\input{./figures/Diagrams/2plus1dag.tikz}}\quad:=\quad %
\InputIfFileExists{Diagrams/2plus1to3dag.tikz}{}{\input{./figures/Diagrams/2plus1to3dag.tikz}}.
\eeq
These diagrams satisfy nice conditions, such as:
\beq
\InputIfFileExists{Diagrams/2plus1eq1.tikz}{}{\input{./figures/Diagrams/2plus1eq1.tikz}} \quad = \quad %
\InputIfFileExists{Diagrams/sym3.tikz}{}{\input{./figures/Diagrams/sym3.tikz}}\qquad \text{and} \qquad %
\InputIfFileExists{Diagrams/2plus1eq2.tikz}{}{\input{./figures/Diagrams/2plus1eq2.tikz}}\quad =\quad %
}\,,
\eeq
which follow from the fact that:
\beq
\InputIfFileExists{Diagrams/symComp.tikz}{}{\input{./figures/Diagrams/symComp.tikz}}\quad=\quad%
\InputIfFileExists{Diagrams/symKaroubi.tikz}{}{\input{./figures/Diagrams/symKaroubi.tikz}}.
\eeq

%We will examine the meaning of the properties implied by the equalities above, and conjecture they will encode operationally motivated concepts.

Now, following Eq.~\eqref{eq:KaroubiMapInt}, we can see how processes on distinguishable particles are mapped onto processes on indistinguishable particles, for example:
\beq
\InputIfFileExists{Diagrams/swapAA.tikz}{}{\input{./figures/Diagrams/swapAA.tikz}} \quad\mapsto\quad %
\InputIfFileExists{Diagrams/symSwap.tikz}{}{\input{./figures/Diagrams/symSwap.tikz}}\quad=\quad%
\InputIfFileExists{Diagrams/symSwap1.tikz}{}{\input{./figures/Diagrams/symSwap1.tikz}}\quad=\quad %
\begin{tikzpicture}
	\begin{pgfonlayer}{nodelayer}
		\node [style=none] (42) at (0, 2) {};
		\node [style=none] (43) at (0, -2) {};
		\node [style=right label] (44) at (0, -1.25) {$\mathsf{Sym}_A^2$};
	\end{pgfonlayer}
	\begin{pgfonlayer}{edgelayer}
		\draw [qWire, in=90, out=-90] (42.center) to (43.center);
	\end{pgfonlayer}
\end{tikzpicture}
},
\eeq
which follows from Eq.~\eqref{eq:permInvariance}. That is, swapping two indistinguishable particles leaves them invariant (at least on this operational level).

So far we have seen how to introduce symmeterised systems to a process theory. However, to really understand particles we need to go further than this. We need to see how these symmeterised systems split up into different types of particles. For example, in quantum theory we have that $\mathsf{Sym}_Q^2 = \mathsf{Fermion}\oplus \mathsf{Boson}$. To get there, we will first take a step back and formally introduce the $\oplus$ symbol in this generalised setting.

\subsection{Biproducts}

In the type of process theories we consider in this work (i.e., those with a well-defined notion of summation of processes), given two systems $A$ and $B$ one can define a new system, denoted $A\oplus B$, with the property that
processes from $A\oplus B$ to $C\oplus D$ are given by matrices of processes:
\beq
\left(\begin{array}{cc} f:A\to C & g:A\to D \\ h:B\to C & i:B\to D\end{array} \right).
\eeq
More generally, given two systems $\bigoplus_i A_i$ and $\bigoplus_j B_j$ we describe a process from $\bigoplus_i A_i$ to $\bigoplus_j B_j$  by the matrix
\beq
(f_{ij}:A_i\to B_j)_{ij}.
\eeq

Importantly, in these process theories one can define disjoint inclusion and projection maps:
\beq
\InputIfFileExists{Diagrams/biproduct1.tikz}{}{\input{./figures/Diagrams/biproduct1.tikz}}\qquad \text{and}\qquad %
\InputIfFileExists{Diagrams/biproduct2.tikz}{}{\input{./figures/Diagrams/biproduct2.tikz}}
\eeq
such that
\beq %
\InputIfFileExists{Diagrams/biproduct3.tikz}{}{\input{./figures/Diagrams/biproduct3.tikz}}\quad =\quad \delta_{ij} \ \  %
\begin{tikzpicture}
	\begin{pgfonlayer}{nodelayer}
		\node [style=none] (3) at (0.5, 1.25) {};
		\node [style=none] (4) at (0.5, -1) {};
		\node [style=right label] (5) at (0.5, -0.75) {$A_i$};
	\end{pgfonlayer}
	\begin{pgfonlayer}{edgelayer}
		\draw [qWire, in=90, out=-90] (3.center) to (4.center);
	\end{pgfonlayer}
\end{tikzpicture}
}\qquad \text{and}\qquad \sum_i\ \ %
\InputIfFileExists{Diagrams/biproduct5.tikz}{}{\input{./figures/Diagrams/biproduct5.tikz}}\quad =\quad %
\begin{tikzpicture}
	\begin{pgfonlayer}{nodelayer}
		\node [style=right label] (0) at (-0.25, -1) {$\bigoplus_jA_j$};
		\node [style=none] (3) at (-0.25, -1.5) {};
		\node [style=none] (4) at (-0.25, 1.5) {};
	\end{pgfonlayer}
	\begin{pgfonlayer}{edgelayer}
		\draw [qWire, in=-90, out=90] (3.center) to (4.center);
	\end{pgfonlayer}
\end{tikzpicture}
}.
\eeq
 In fact, the existence of such maps can be taken as the abstract definition of the biproduct of systems. \blk  These maps will be crucial when defining particle types in the next section. 

\subsection{Particle types}

Let us suppose that we have split the symmetrisation idempotents to obtain the systems $\mathsf{Sym}_A^n$ as defined in Sec.~\ref{sec:SplitSym}. Next we can express that idempotent as a sum of orthogonal idempotents $\mathsf{Part}_i$: 
\begin{align}\label{eq:symintoparti}
\InputIfFileExists{Diagrams/symKaroubi.tikz}{}{\input{./figures/Diagrams/symKaroubi.tikz}}\quad =  \quad\sum_i \ \ %
\InputIfFileExists{Diagrams/decomp5.tikz}{}{\input{./figures/Diagrams/decomp5.tikz}}.
\end{align}
There is a canonical choice for this as it can be shown that there is a unique ``most refined'' such decomposition, from which any other can be obtained via a coarse graining. This fact will be demonstrated in a forthcoming second version of this work. \blk

Note that $\mathsf{Part}_i$ unlike $\mathsf{Sym}$ cannot be thought of as operationally implementable processes as, for example, they do not preserve the normalisation of states.
 Instead, the collection $\{\mathsf{Part}_i\}_i$ can be thought of as describing a particular (nondestructive) measurement on the system, hence, any given $\mathsf{Part_i}$ describes what happens if the $i$th outcome of this measurement occurs. In fact, when this measurement is viewed as a measurement of the operationally symmetrised system $\mathsf{Sym}_A^n$ then it is in fact a non-disturbing measurement which we will view as the measurement of the particle type.

\blk

% \bnote{[How is this consistent with the fact that we want each of those to then split through a physical system?]}\jnote{Does the above elaboration help? I'm not sure it does....}\jnote{Maybe the formalisation of this concern is the question about whether these systems that we split the $\mathsf{Part}_i$ through are \emph{causal}, i.e., do they have a unique discard? There is a canonical choice for discard which is just to use the discard for the full theory composed with $\kappa_i$, but what do we need to do to check this is unique?}

To see this more clearly, \blk we can split the symmetrisation operation through $\mathsf{Sym}_A^n$, and the idempotents $\mathsf{Part}_i$ through systems that we denote  $(S_A^n)_i$, hence:
\beq\label{eq:thesumbip}
\InputIfFileExists{Diagrams/symKaroubi1.tikz}{}{\input{./figures/Diagrams/symKaroubi1.tikz}} \quad=\quad \sum_i \ \ %
\InputIfFileExists{Diagrams/decomp4.tikz}{}{\input{./figures/Diagrams/decomp4.tikz}} \,.
\eeq

%If one can show that
 This implies that \blk
% Eq.~\eqref{eq:thesumbip} implies that 
 $\mathsf{Sym}_A^n = \bigoplus_i (S_A^n)_i$, and one can
\begin{quote}
\textit{interpret these new systems $(S_A^n)_i$ as the different particle types that arise within the theory}. 
\end{quote}
\blk

That is, a collection of $n$ systems of type $A$ when symmetrised will give different particle types labelled by $i$, and the system in the theory associated to the $n$ particles of type $i$ is $(S_A^n)_i$. 

For example, in quantum theory if we have a pair of qubits $\mathcal{B}(\mathds{C}^2)\otimes \mathcal{B}(\mathds{C}^2)$ and we symmetrise them, then we find that $\mathsf{Sym}_{\mathcal{B}(\mathds{C}^2)}^2 = \mathcal{B}(\mathds{C}^3)\oplus\mathcal{B}(\mathds{C}^1)$. In a second version of this work, we will see that this decomposition cannot be refined further and therefore is the unique, most refined decomposition. Hence, we have two sectors, the first is the fermionic sector where the pair of fermions can be viewed as a single qutrit system, and the boson sector where the pair of bosons can be viewed as a trivial one dimensional quantum system. 

Note, however, that this definition only really makes sense if the system $A\otimes...\otimes A$ cannot itself be (nontrivially) written as a direct sum decomposition. Classical theory is the key example where this is not the case, as any classical system can always be decomposed as a direct sum of one dimensional classical systems. In such cases we need to modify the definition such that particle types correspond only to any new decompositions which are created thanks to the symmetrisation. We, however, leave formalising this to a second version of this work.
\blk

\section{Discussion}

In this manuscript we presented two ways in which one may pursue the study of indistinguishable particles in general probabilistic theories. The first case corresponds to framing the questions in the traditional GPT framework. Here we associate the different types of indistinguishable particles with the different orbits of symmetric pure states that emerge under symmetry preserving transformations. In the case of bipartite quantum theory we recover the canonical two types of indistinguishable particles---bosons and fermions. 

In other theories, we find a difference when taking the pure states of our particle types to be symmetric extremal states or extremal symmetric states. In the Boxworld GPT, without imposing indistinguishability there already exists non-intersecting orbits of pure states, that could be taken as different particle types. Further imposing indistinguishability then does not result in further particle types. On the other hand, in the Spekkens toy theory we find three different particle types. 

The obvious next steps is to consider systems of multiple indistinguishable particles, and study the emergence of paraparticles in quantum theory.
%\jnote{This might only be because we are only considering pairs of particles though. Possibly we also get paraparticles for $n>2$.} while the latter admits of none. In the case of Boxworld we identify two types of identical particles emerging for composites of two particles, neither of which is boson- or fermion-like: their type is instead related to the entanglement properties of the system. \bnote{[I still don't know if this is correct, how come we don't have boson and fermion types instead?]}. \jnote{I think the latest plan was to characterise particle types by the splitting of existing orbits rather than the orbits themselves. So in Boxworld we end up with just a single particle type for each of the preexisting orbits.}
There are also still various other GPTs that one may explore, such as the polygon GPTs~\cite{Janotta}, and one interesting question to pursue through them is to understand which properties of indistinguishable particles are non-classical, which may be unique to quantum theory, and how they relate to other features of the GPTs. Another interesting question pertains to how the type of emergent indistinguishable particles depends on the type of composition rule utilised by the GPT to compose their systems (two extreme cases of composition rules being the so-called min and max tensor products). Finally, one can ask what happens if we move on from tomographically-local GPTs: how do the so-called \textit{holistic} degrees of freedom that non-tomographically-local GPTs have impact the conceptualisation of indistinguishable particles in such GPTs? 

In the second half of this work, we presented the study of indistinguishability of particles in the diagrammatic formalism of process theories. Here the particle types emerged as the biproduct systems of which the symmetrisation idempotents can be expressed as a sum. We find that there is a unique most refined decomposition of the symmetrisation idempotent which we take as the canonical choice for defining particle types. Again, we recover the bosons and fermions for pairs of quantum systems.  

Interesting questions arise in this framework. For instance, one may ask how composition of systems may change or preserve the different particle types, and how this should be formalised. Secondly, when considering quantum theory, can we think of $\mathsf{Part}_F$ as wavefunction-antisymmetrisation and of $\mathsf{Part}_B$ as wavefunction-symmetrisation? Can this also be possible for other theories, such as Euclidean Jordan Algebras? And finally, do we recover the same particle types that the orbit-based approach yields?

%\bnote{[Once the examples are fleshed out in both frameworks, we can adapt the discussion. Also, make comment on transitive vs.~non-transitive theories, and the mix-then-sym vs.~sym-then-mix dilemma. ]}

\blk 
 
\section*{Acknowledgments}

JHS was supported by the National Science Centre, Poland (Opus project, Categorical Foundations of the Non-Classicality of Nature, project no. \\2021/41/B/ST2/03149).
ABS, VJW, and MK acknowledge support by the Foundation for Polish Science (IRAP project, ICTQT,contract no.~2018/MAB/5, co-financed by EU within Smart Growth Operational Programme).
This research was conducted while visiting the Okinawa Institute of Science and Technology (OIST) through the Theoretical Sciences Visiting Program (TSVP).

\bigskip

\bibliographystyle{unsrt}
\bibliography{indistbib}

\end{document}